\newtheorem{thm}{Theorem}
\newtheorem{lemma}{Lemma}
\newtheorem{proposition}{Proposition}
\newtheorem{definition}{Definition}
\newtheorem{example}{Example}
\newtheorem{corollary}{Corollary}
\DeclareMathOperator{\argmax}{argmax}
\title[A Language for Representing Combinatorial Auctions]{
A General Framework for the Logical Representation of Combinatorial Exchange Protocols 
}
\author{Munyque Mittelmann}
\affiliation{
  \institution{Universit\'e de Toulouse - IRIT}
  \city{Toulouse, France}}
\email{munyque.mittelmann@irit.fr}
\author{Sylvain Bouveret}
\affiliation{
  \institution{Universit\'e Grenoble Alpes - LIG}
  \city{Grenoble, France}}
\email{sylvain.bouveret@imag.fr}
\author{Laurent Perrussel}
\affiliation{
  \institution{Universit\'e de Toulouse - IRIT}
  \city{Toulouse, France}}
\email{laurent.perrussel@irit.fr}
\begin{abstract}
The goal of this paper is to propose a framework for representing and reasoning about the rules governing a combinatorial exchange. Such a framework is at first interest as long as we want to build up digital marketplaces based on auction, a widely used mechanism for automated transactions. Combinatorial exchange is the most general case of auctions, mixing the double and combinatorial variants: agents bid to trade bundles of goods. Hence the framework should fulfill two requirements: (i) it should enable bidders to express their bids on combinations of goods and (ii) it should allow describing the rules governing some market, namely the legal bids, the allocation and payment rules. To do so, we define a logical language in the spirit of the \textit{Game Description Language}: the \textit{Combinatorial Exchange Description Language} 
is the first language for describing combinatorial exchange in a logical framework. The contribution is two-fold: first, we illustrate the general dimension by representing different kinds of protocols, and second, we show how to reason about auction properties in this machine-processable language.
\end{abstract}
\keywords{Logics for Multi-agents,
Game Description Language, 
Auction-based Markets}
\newcommand{\BibTeX}{\rm B\kern-.05em{\sc i\kern-.025em b}\kern-.08em\TeX}
\begin{document}


\pagestyle{fancy}
\fancyhead{}


\maketitle 

\section{Introduction}
Auction-based markets are widely used for automated business transactions. There are numerous variants whether we consider single or multiple goods, single or multiple units, single or double-side \cite{auction1999klemperer,Krishna}. For a fixed set of parameters, the auction protocol, i.e.,\ the bidding, allocation, and payment rules, may also differ. Building  intelligent agents that can  switch  between different auctions and process their rules is a key issue for building automated auction-based marketplaces. To do so, auctioneers should at first describe the rules governing an auction and second allow bidders to express complex bids. The aim of this paper is to propose such language with  clear semantics  enabling us to derive properties. Hereafter, we consider \emph{combinatorial exchanges}  which are the most general case for auctions, mixing double and combinatorial variants
\cite{lubin2008ice}.

More precisely, such \emph{Combinatorial Exchange Description Language} should address the following dimensions:
\begin{description}
    \item[Agent] (i) one seller and multiple buyers, or vice-versa (single-side auctions); (ii) multiple sellers and buyers (double-side auctions); (iii) multiple bidders that can be both sellers and buyers (i.e., agents are traders);
    \item[Unit]  (i) single-unit or (ii) multi-unit auction;
    \item[Good type] (i) single-good or (ii) multiple-goods; for that latter case a bid can 
    consider units and operators such as ``1 table \emph{and} 4 chairs''.
    \item[Bidding protocol] the auction protocol may be (i) one-shot (e.g.,\ sealed-bid auction) or (ii) iterative  (e.g.,\ ascending and descending auctions);
   \item[Allocation and payment] the protocol should (i) detail how the winners are determined \cite{Xia2005}, 
   (ii) quantify the money transfer (e.g., first price or Vickrey–Clarke–Groves payment \cite{Parkes2005}).
\end{description}

In the spirit of the \emph{General Game Playing} \cite{GT14} where games are described with the help of a logical language, namely the \emph{Game Description Language} (GDL), we propose the \emph{Combinatorial Exchange Description Language} (CEDL) which is based on the Auction Description Language (ADL) \cite{MP2020}.   
CEDL includes a bidding language that can represent a wide range of auctions from a single-side, single-unit and good auction (as a single-unit Vickrey Auction) to Iterative Combinatorial Exchange \cite{Parkes2005}. As for GDL and ADL, we propose a precise semantics based on state-transition models, that gives a clear meaning to the properties describing an auction.
CEDL embeds the Tree-Based Bidding Language \cite{Parkes2005}, which generalizes known languages such as XOR/OR \cite{Nisan2000} to combinatorial exchange, where agents should be able to express preferences for both buying and selling goods.
To the best of our knowledge, CEDL is the first framework offering a unified perspective on an auction mechanism: (i) representation on how to bid and (ii) representation of the protocol including allocation and payment. Such a framework offers two benefits: (i) with this language, one can represent many kinds of auctions in a compact way and (ii) the precise state-transition semantics can be used to derive key properties.

CEDL extends ADL in numerous ways: ADL only focuses on one-side auctions and can not represent double auctions; next, ADL is also unable to consider the multiple-good dimension. These two dimensions are now considered in CEDL and thus allow (i) to handle the problem of re-allocating goods, as in an exchange, and (ii) to embed a bidding language to address combinatorial markets.

The paper is organized as follows: in Section~\ref{sec:preliminaries} we  review the key components for describing auctions; we next detail in Section~\ref{sec:tbbl} the bidding language that will be embedded in CEDL and exhibit its key properties. Section~\ref{sec:cedl} details CEDL: semantics and syntax. Section~\ref{sec:protocols} specifies two auction protocols with the help of CEDL:  a one-shot combinatorial exchange and a variant of ascending combinatorial auction. We conclude by discussing related work and future work for going further. 

\section{Preliminaries}\label{sec:preliminaries}

To describe a combinatorial exchange, we first define an auction signature, that specifies  the auction participants (the agents), the goods involved in the auction  and  the propositions and variables describing each state of the auction:

\begin{definition}
\label{def:sig}
An \emph{auction signature} $\mathcal{S}$ is a tuple $(N, G, \mathcal{A}, \Phi, Y, I)$, where: 
(i) $N = \{1, 2, \dots, n\}$ is a nonempty finite set of \textit{agents} (or bidders);
(ii) $G = \{1, \dots, m\}$ is a nonempty set of good types;
(iii) $\mathcal{A}$ is a nonempty finite set of \textit{actions} or \textit{bid-trees};
(iv)  $\Phi = \{P, Q, \dots \}$ is a finite set of atomic propositions specifying individual features of a state; 
(v) $Y = \{y_1, y_2, \dots\}$ 
is a finite set of numerical variables specifying  numerical features in a state;
(vi) $I = \{z: z_{min} \leq z \leq z_{max} $ \& $z \in \mathbb{Z}\}$ is a finite subset of integer numbers, denoting the value range for any countable component of the framework, for some arbitrary bounds $z_{min}\leq z_{max}$.  
We denote $I_+ = I\cap \mathbb{N}$ and $I_- = (I\setminus I_+)\cup \{0\}$ as the non-negative and non-positive subsets of $I$, respectively. 
\end{definition}

We will fix an auction signature  $\mathcal{S}$  and all concepts will be based on this signature, except if stated otherwise.
Note that $z_{min}$ and $z_{max}$, in the definition of $I$, should be large enough to represent the total supply of goods being traded, \emph{i.e.}, as we shall see below, $\sum_{i\in N} \sum_{j\in G} x_{i,j}$, as well as the cumulative available money among agents. Through the rest of this paper, we assume that is the case.\footnote{In practice, we could have different bounds for each framework component (e.g., payment, trade).
For notation simplicity, we assume a unique value range $I$. 
}

A \textit{joint allocation} is a tuple $X = (x_{1}, \dots, x_{n})$, where $x_i = (x_{i,1}, \dots$, $x_{i,m})$ is an \textit{individual allocation} for agent $i \in N$ and $x_{i, j} \in I_+$ denotes the number of units $j\in G$ held by $i$. A \textit{joint trade} is a tuple $\Lambda = (\lambda_{1},  \dots, \lambda_{n})$,  where $\lambda_i = (\lambda_{i,1}, \dots, \lambda_{i,m})$ is an \textit{individual trade} for agent $i\in N$ and $\lambda_{i, j} \in I$ denotes the number of units $j\in G$ being traded by agent $i$. 
A trade can be seen as a change over an agent's initial allocation, resulting in a new one. A positive trade expresses how many units of a good type were purchased and a negative trade represents how many units were sold. 

Given an auction signature, we now  define a bidding language allowing agents to express the combination of goods they are willing to buy (or sell) and the value they intend to pay (or receive).

\section{Tree-Based Bidding Language}\label{sec:tbbl}

The Tree-Based Bidding Language (TBBL) \cite{Parkes2005,lubin2008ice} is a language designed for  Combinatorial Exchange. It allows to represent  buyers and sellers demands in the same structure. We adopt TBBL as it is a highly expressive and compact language for combinatorial bids as stressed out by Cavallo \emph{et al.} \cite{cavallo2005tbbl} who compare TBBL to alternative languages such as the OR language \cite{Boutilier01}. 

Our bidding language $\mathcal{L}_{TBBL}$ 
only differs from the original definition of TBBL
in the fact that we assume all language components and related optimization problems are bounded by $I$. 
\begin{definition}\label{def:tbbl}
A formula in $\mathcal{L}_{TBBL}$ is called a \emph{bid-tree} (or simply a \textit{bid}) and is generated by the following BNF:
\[ \beta::= \langle z, j, z\rangle \mid IC_x^y(\bar{\beta}, z)  \] 
where
$\bar{\beta}::= \bar{\beta}, \beta \mid \beta$ 
is a nonempty \textit{bid list},  
$j\in G$, $z \in I$, and $x,y \in I_+$.  
\end{definition}

A bid in the form $\langle q,j,v\rangle$ is called a leaf and represents that the agent is willing to buy (or sell) $q$ units of the good $j$ and pay (or receive) $v$. 
The \textit{interval-choose} (IC) operator defines a range on the number of child nodes that must be satisfied. Thus, a bid $IC_x^y( \bar{\beta}, v)$ indicates the agent is willing to pay (or receive) $v$ for the satisfaction of at least $x$ and at most $y$ of his children nodes $\bar{\beta}$. The IC operator can express  logical connectors. For instance, 
$IC^1_1(\bar{\beta}, v)$ is equivalent to an $XOR$ operator between the bids in the list $\bar{\beta}$. Let $s = |\bar{\beta}|$ (\emph{i.e.}, the list size), $IC^s_s(\bar{\beta}, v)$ is equivalent to an AND operator and $IC^1_s(\bar{\beta}, v)$ is equivalent to an OR operator. For simplicity, we denote $XOR(\bar{\beta}, v)$, (resp. $AND(\bar{\beta}, v)$,  $OR(\bar{\beta}, v)$) as a shortcut for $IC^1_1(\bar{\beta}, v)$ (resp. $IC^s_s(\bar{\beta}, v)$, $IC_1^s(\bar{\beta}, v)$).

For instance, in Figure~\ref{fig:tbbl}, agent $r1$  bids to buy $1$ or $2$ units of $\mathsf{a}$ paying $2$ for each or to sell one unit of $\mathsf{b}$ receiving $3$. Agent  $r2$ bids an exclusive disjunction for either (i) to sell one unit of $\mathsf{a}$ and receive $3$; or (ii) to  sell $2$ units of $\mathsf{a}$ receiving $4$ and to buy one unit of $\mathsf{b}$ paying $2$. The node representing (ii) has an additional value of~$1$. 

\begin{figure}[ht]
\centering
\includegraphics[width=0.8\columnwidth]{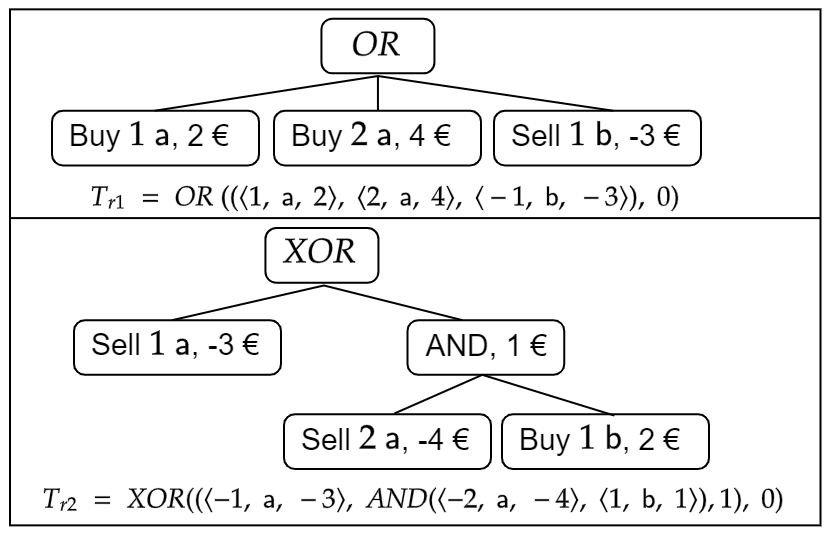}
\caption{Examples of tree-bids for agents $r1$ and $r2$} \label{fig:tbbl}
\Description{}
\end{figure}

Hereafter, we introduce some extra notations to characterize solutions and winners. Let $T_i \in \mathcal{L}_{TBBL}$ be a bid-tree from bid\-der~$i$, 
the set  $Node(T_i)$ denotes all nodes in the tree, that is, all its inner bids, including $T_i$ itself. Formally, if $T_i$ is in the form $\langle q, j, v\rangle$, then $Node(T_i) = \{T_i\}$. Otherwise, $T_i$ is in the form $IC_x^y(\bar{\beta},v')$ and  $Node(T_i) = \{T_i\}\cup  Node(\bar{\beta}_1)\cup \cdots  \cup Node(\bar{\beta}_{s})$, where $s=|\bar{\beta}|$. 

Let $\beta \in Node(T_i)$, the set $Child(\beta) \subset Node(T_i)$ denotes the children of node $\beta$.  If $\beta$ is in the form $IC_x^y(\bar{\beta}, v)$, then 
$Child(\beta) = \{\bar{\beta}_1, \dots, \bar{\beta}_s\}$, where $s=|\bar{\beta}|$. Otherwise, $Child(\beta) = \{\}$. The leaves of a bid-tree $T_i$ are denoted by 
$Leaf(T_i) = \{\langle q, j, v\rangle \in Node(T_i): q \in I, v\in I$ \& $j\in G \}$.
The value specified at node $\beta$ is denoted by $v_i(\beta) \in I$. If $\beta$ is in the form $\langle q, j, v\rangle$, then $v_i(\beta) = v$. Otherwise, $\beta$ is in the form $IC_x^y(\bar{\beta},v')$ and  $v_i(\beta) = v'$. Finally, the quantity of units of the good $j$ specified at a leaf  $\beta = \langle q, j, v\rangle$ is denoted $q_i(\beta, j) = q$. For any other $j' \neq j\in G$, $q_i(\beta, j') = 0$. For any node $\beta \not \in Leaf(T_i)$ and $j\in G$, $q_i(\beta, j) = 0$.

If $\beta$ is not a leaf (\emph{i.e.}, $\beta \in Node(T_i)\setminus Leaf(T_i)$), then it is in the form $IC_x^y(\bar{\beta})$ and we denote by $\mathrm{x}_\beta $ and $\mathrm{y}_\beta$ the interval-choose constraints $x$ and $y$, respectively.

\subsection{Trade value and valid solutions} 
Given a tree $T_i$ from agent $i$, the value of a trade $\lambda_i \in I^m$ is defined as the sum of the values in all satisfied nodes, where the set of satisfied nodes is chosen to provide the \textit{maximal} total value. Let $sat_i(\beta) \in \{0,1\}$ denote whether a node $\beta \in Node(T_i)$ is satisfied and $sat_i = \{\beta: sat_i(\beta) = 1, $ for all $\beta\in Node(T_i)\}$ denote the nodes satisfied in a solution.

A solution $sat_i$ is valid for a tree $T_i$ and trade $\lambda_i$, written $sat_i\in valid(T_i, \lambda_i)$ if Rules \ref{rule:R1} and  \ref{rule:R2} hold \cite{lubin2008ice}:
\begin{gather}
\mathrm{x}_\beta sat_i(\beta) \leq \sum_{\beta' \in Child(\beta)} sat_i(\beta') \leq \mathrm{y}_\beta sat_i(\beta)
\nonumber 
\\ \forall \beta  
\in Node(T_i)\setminus Leaf(T_i)  
\tag{R1}\label{rule:R1}
\nonumber
\\ \sum_{\beta \in Leaf(T_i)} q_i(\beta, j) sat_i(\beta) \leq \lambda_{i,j}, \forall j\in G 
\tag{R2}\label{rule:R2} 
\end{gather}
Rule \ref{rule:R1} ensures that no more and no less than the appropriate number of children are satisfied for any node that is satisfied. Rule \ref{rule:R2} requires that the total increase in quantity of each item across all satisfied leaves is no greater than the total number of units traded.

The total value of a trade $\lambda_i$, given a bid-tree $T_i$, is defined as the solution to the following problem:
\begin{gather}
v_i(T_i, \lambda_i) = \mathop{\argmax}_{sat_i} \sum_{\beta \in Node(T_i)} v_i(\beta) sat_i(\beta)
\nonumber
\\ \text{s.t. } \ref{rule:R1}, \ref{rule:R2} \text{ hold}
\nonumber
\end{gather}

\subsection{Winner Determination}

Given an auction signature, the bid-trees $T = (T_1, \dots, T_n)$ and a joint  allocation $X = (x_1, \dots, x_n)$, where $T_i \in \mathcal{L}_{TBBL}$ denotes the bid-tree from agent $i \in N$,  $x_i = (x_{i, 1}, \dots, x_{i, m})$ is the individual allocation for $i$ and $x_{i,j} \in I_+$  for each good $j\in G$. 

The Winner Determination (WD) defines a pair $(\Lambda, sat)$ obtained by the solution to the following mixed-integer program \cite{lubin2008ice}:
\begin{gather}
WD(T, X): \mathop{\argmax}_{\Lambda, sat} \sum_{i \in N} \sum_{\beta \in Node(T_i)}
v_i(\beta) sat_i(\beta)
\nonumber
\\ \text{s.t. } \lambda_{i, j}  + x_{i,j} \geq 0, \forall i \in N, j\in G
\tag{C1} \label{const:C1}
\\
\sum_{i\in N} \lambda_{i,j} = 0, \forall j\in G
\tag{C2} \label{const:C2}
\\
sat_i \in valid(T_i, \lambda_i), \forall i \in N
\tag{C3} \label{const:C3}
\\
sat_i(\beta) \in \{0,1\}, 
\lambda_{i, j} \in I
\tag{C4} \label{const:C4}
\end{gather}
where $sat= (sat_1, \dots, sat_n)$. Constraint \ref{const:C1} ensures that the joint trade $\Lambda$ is \textit{feasible} given $X$, that is no agent sells more items then she initially hold.
Constraint \ref{const:C2} imposes strict balance in the supply and demand of goods. 
Constraint \ref{const:C3} ensures that each individual trade for an agent $i$ is valid given her bid-tree. Constraint \ref{const:C4} defines the range for trades and node satisfaction. We denote by $\Lambda(T, X)$ the joint trade $\Lambda$ in the solution $WD(T,X)= (\Lambda, sat)$, where $\lambda_{i}(T, X)$ denotes the individual trade for agent $i$ and $\lambda_{i,j}(T, X)$ denotes the units of good $j$ traded by agent $i$. 

If there are two or more solutions for $WD(T, X)$, the trade $\Lambda(T, X)$ will be chosen w.r.t. some total order among the elements of $I^{mn}$. 
This tie-breaking order is omitted to avoid overloading the notation.
In the examples, we assume this order is defined such as it is compatible with the Pareto dominance relation \cite{Voorneveld2003}.

\paragraph{Bid-tree Equivalence}

To ensure the finiteness of the framework, we define a maximal subset of TBBL, such that there is no pair of bid-trees representing the exact same value for every  trade.

\begin{definition}
Given the bid-trees $T_i, T_i' \in \mathcal{L}_{TBBL}$ from agent $i \in N$, we say $T_i$ and $T_i'$ are equivalent bid-trees, denoted $T_i \approx_{tree} T_i'$, if 
for all $\lambda_i \in I^m$, 
$v_i(T_i, \lambda_i) = v_i(T_i', \lambda_i)$.
\end{definition}
Notice that $\approx_{tree}$ is reflexive, symmetric and transitive. 

\begin{proposition}
\label{prop:tbblfinite}
Given $\mathcal{L}_{TBBL}' \subset \mathcal{L}_{TBBL}$, if  for all bid-trees $\beta, \beta' \in \mathcal{L}_{TBBL}'$ where $\beta \neq \beta'$,
$\beta \not \approx_{tree} \beta'$ then $\mathcal{L}_{TBBL}'$ is finite. 
\end{proposition}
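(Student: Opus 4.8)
The strategy is to show that the quotient $\mathcal{L}_{TBBL}/\!\approx_{tree}$ is finite, since any subset $\mathcal{L}_{TBBL}'$ on which $\approx_{tree}$ is trivial (i.e., distinct trees are never equivalent) injects into this quotient. Concretely, I would associate to each bid-tree $T_i$ its \emph{value function} $f_{T_i} : I^m \to I$ given by $f_{T_i}(\lambda_i) = v_i(T_i, \lambda_i)$, and observe that by definition of $\approx_{tree}$, two bid-trees are equivalent precisely when they induce the same value function. Thus the map $T_i \mapsto f_{T_i}$ descends to an injection from $\mathcal{L}_{TBBL}/\!\approx_{tree}$ into the function space $I^{(I^m)}$. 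Since $I$ is a finite set of integers (bounded by $z_{min}, z_{max}$) and $m = |G|$ is finite, the domain $I^m$ is finite and hence $I^{(I^m)}$ is finite. Therefore $\mathcal{L}_{TBBL}/\!\approx_{tree}$ is finite.

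To conclude, I would argue that $\mathcal{L}_{TBBL}'$ embeds into this finite quotient: define $g : \mathcal{L}_{TBBL}' \to \mathcal{L}_{TBBL}/\!\approx_{tree}$ by $g(\beta) = [\beta]_{\approx_{tree}}$. The hypothesis states that for distinct $\beta, \beta' \in \mathcal{L}_{TBBL}'$ we have $\beta \not\approx_{tree} \beta'$, which is exactly the statement that $g$ is injective. A set that injects into a finite set is finite, so $|\mathcal{L}_{TBBL}'| \leq |\mathcal{L}_{TBBL}/\!\approx_{tree}| \leq |I|^{|I|^{m}} < \infty$, which is the claim.

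The only subtle point — and the step I would be most careful about — is confirming that $v_i(T_i, \lambda_i)$ is well-defined as an element of $I$ for every trade $\lambda_i \in I^m$, so that $f_{T_i}$ genuinely lands in the finite codomain $I$. This relies on the paper's standing assumption that all language components and the associated optimization problems are bounded by $I$: the $\argmax$ defining $v_i(T_i, \lambda_i)$ ranges over the finite set of satisfaction assignments $sat_i \in \{0,1\}^{Node(T_i)}$ satisfying \ref{rule:R1} and \ref{rule:R2}, and the resulting objective value $\sum_{\beta} v_i(\beta) sat_i(\beta)$ is assumed to fall within $[z_{min}, z_{max}]$. Once this is granted, the counting argument above is entirely routine; no structural analysis of the bid-trees themselves is needed, since the finiteness comes purely from the finiteness of the semantic space into which trees are mapped.
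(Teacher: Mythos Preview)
Your proposal is correct and is essentially the same counting argument as the paper's: both identify each bid-tree with its value function $\lambda_i \mapsto v_i(T_i,\lambda_i)$, observe that equivalence coincides with equality of these functions, and bound the number of non-equivalent trees by $|I|^{|I|^m}$. Your presentation via the quotient and an explicit injection is slightly more formal, and your caveat about $v_i(T_i,\lambda_i)\in I$ is a point the paper leaves implicit, but the substance is identical.
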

\begin{proof}
Let $i \in N$ be an agent and $s = |I|$ denote the size of $I$.  
Assume  $\mathcal{L}_{TBBL}' \subset \mathcal{L}_{TBBL}$ such that  
for all $\beta_i, \beta_i' \in \mathcal{L}_{TBBL}$, where $\beta_i\neq \beta_i'$, we have $\beta_i \not \approx_{tree} \beta_i'$. 
That is, $v_i(\beta_i, \lambda_i) \neq v_i(\beta_i', \lambda_i)$ for some $\lambda_i \in I^m$.

For each trade $\lambda_i \in I^m$, there are $s$ possibilities of distinct values $v_i(\beta_i, \lambda_i)$. Furthermore, there are $s^m$ distinct trades.  
Thus, there may be at most  $s^{s^m}$ non equivalent bids-tree in $\mathcal{L}_{TBBL'}$ and  $\mathcal{L}_{TBBL'}$ is finite.
\end{proof}

Let $\mathcal{L}_{TBBLf}$ be a maximal subset of $\mathcal{L}_{TBBL}$ such that $\beta \not \approx_{tree} \beta'$, for all $\beta, \beta' \in \mathcal{L}_{TBBL}$, where $\beta \neq \beta'$. 
\begin{corollary}
$\mathcal{L}_{TBBLf}$ is finite and  $|\mathcal{L}_{TBBLf}| = |I|^{|I|^m}$.
\end{corollary}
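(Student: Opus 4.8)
The plan is to read off both assertions from Proposition~\ref{prop:tbblfinite} and its proof. Finiteness is immediate: by construction any two distinct $\beta,\beta'\in\mathcal{L}_{TBBLf}$ satisfy $\beta\not\approx_{tree}\beta'$, so $\mathcal{L}_{TBBLf}$ meets the hypothesis of Proposition~\ref{prop:tbblfinite} and is therefore finite. The counting in that proof also yields one inequality for the cardinality: the $\approx_{tree}$-class of a bid-tree $T_i$ is entirely determined by the map $\lambda_i\mapsto v_i(T_i,\lambda_i)$ from $I^m$ to $I$; there are $|I|^m$ distinct trades and $|I|$ possible values, hence at most $|I|^{|I|^m}$ such maps, so at most $|I|^{|I|^m}$ classes, and since $\mathcal{L}_{TBBLf}$ is a set of class representatives we get $|\mathcal{L}_{TBBLf}|\le|I|^{|I|^m}$.

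It then remains to establish the matching lower bound $|\mathcal{L}_{TBBLf}|\ge|I|^{|I|^m}$. Because $\mathcal{L}_{TBBLf}$ is a \emph{maximal} subset of $\mathcal{L}_{TBBL}$ with pairwise non-equivalent elements, it contains at least one representative of every $\approx_{tree}$-class, so it suffices to show that each of the $|I|^{|I|^m}$ functions $f\colon I^m\to I$ is realised as $v_i(T,\cdot)$ for some $T\in\mathcal{L}_{TBBL}$. The construction I would use builds $T$ from ``bundle gadgets'': for a target trade $\lambda^{\star}\in I^m$ and a value $v\in I$, the tree $AND(\langle\lambda^{\star}_1,1,0\rangle,\dots,\langle\lambda^{\star}_m,m,0\rangle,v)$ has, by Rules~\ref{rule:R1}--\ref{rule:R2}, a valid non-empty solution exactly when the trade dominates $\lambda^{\star}$ (then satisfying its root), so that, up to the convention chosen for infeasible trades, its value function is $v$ times the indicator of the up-set of $\lambda^{\star}$; one then places such gadgets under an outer $IC$-node of value $0$ and fixes the gadget values by an inclusion--exclusion over $I^m$ so that the induced $v_i(T,\cdot)$ coincides with the prescribed $f$.

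The main obstacle is exactly this last step. Since Rule~\ref{rule:R2} bounds the total quantity of satisfied leaves only \emph{from above} by the traded amount, the map $\lambda_i\mapsto v_i(T_i,\lambda_i)$ of \emph{any} tree is monotone non-decreasing in $\lambda_i$ wherever $valid(T_i,\lambda_i)\neq\emptyset$; hence realising an arbitrary $f$ forces a careful use of negative node values together with the chosen treatment of the case $valid(T_i,\lambda_i)=\emptyset$, so that the maximiser defining $v_i(T_i,\lambda_i)$ selects exactly the intended set of satisfied nodes at \emph{every} trade at once. Checking that the assembled tree attains value $f(\lambda)$ — and not merely an upper or a lower bound — simultaneously for all $\lambda\in I^m$ is the delicate part, and is where essentially all of the work in the proof lies; if only the weaker statement that $\mathcal{L}_{TBBLf}$ is finite is needed, the first paragraph already suffices.
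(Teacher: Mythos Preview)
The paper provides no separate argument for this corollary; it is simply asserted immediately after Proposition~\ref{prop:tbblfinite}, whose proof only establishes the upper bound $|\mathcal{L}_{TBBLf}|\le |I|^{|I|^m}$. Your first paragraph reproduces exactly that reasoning, and for finiteness and the upper bound you are fully aligned with the paper.

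Where you go beyond the paper is in asking whether the \emph{equality} is actually justified, and your monotonicity observation is the crux. It is correct and decisive: if $\lambda_i\le\lambda_i'$ componentwise, then any $sat_i$ satisfying \ref{rule:R2} for $\lambda_i$ also satisfies it for $\lambda_i'$, while \ref{rule:R1} is independent of $\lambda_i$; hence $valid(T_i,\lambda_i)\subseteq valid(T_i,\lambda_i')$ and $v_i(T_i,\lambda_i)\le v_i(T_i,\lambda_i')$ wherever both sides are defined. Consequently only monotone maps $I^m\to I$ (modulo whatever convention one adopts when $valid(T_i,\lambda_i)=\emptyset$) can arise as $v_i(T_i,\cdot)$, and for $|I|\ge 2$ and $m\ge 1$ there are strictly fewer than $|I|^{|I|^m}$ of those. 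So your proposed inclusion--exclusion construction cannot succeed in general, not because the gadget is wrong but because the target is unreachable: the lower bound $|\mathcal{L}_{TBBLf}|\ge |I|^{|I|^m}$ simply fails.

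In short, your proposal does not fall short of the paper's proof --- there is none beyond the upper bound --- and you have in fact uncovered that the equality in the corollary is unsupported (and, under the natural reading of the definitions, false). If you only need what the paper actually uses downstream, namely finiteness of $\mathcal{L}_{TBBLf}$, your first paragraph already suffices and matches the paper.
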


A maximal subset of TBBL without equivalent bids is finite, but its size grows exponentially over the size of $I$ and the quantity $m$ of goods in the auction.

\section{Combinatorial Exchange Description Language}\label{sec:cedl}

The \emph{Combinatorial Exchange Description Language} (CEDL) is a framework for specification of auction-based markets 
and it is composed by a State-Transition Model and a logical language. Next, we present the model, the common  legality constraints and the language's syntax and semantics.
 
\subsection{State-Transition Model}
Given an auction signature and the bidding language, we define the auction protocol through a state-transition model. It allows us to represent the key aspects of an auction, 
at first the legal bids and how to update the auction state. 

Let $\mathcal{A} \subset \mathcal{L}_{TBBL}$ be a finite set of actions. It is not a  limitation to assume $\mathcal{A}$ finite since a maximal subset of TBBL without equivalent bid-trees is finite.
We denote $noop =_{def} \langle 0, g, 0 \rangle$ as the action of not bidding for any good, for some arbitrary $g \in M$.  

\begin{definition}
\label{def:model2}
A state-transition ST-\textit{model} $M$ is a tuple $(W, \bar{w}, T, L$, $U,  \pi_\Phi , \pi_{Y})$, where: 
(i)  $W$ is a nonempty set of \textit{states};
(ii) $\bar{w} \in W$ is the \textit{initial} state;
(iii) $T \subseteq W$ is a set of \textit{terminal} states;
(iv) $ L \subseteq W \times N \times \mathcal{A}$  is a \textit{legality} relation, describing the legal actions at each state, let $L(w, i) = \{a \in \mathcal{A} \mid (w,i,a) \in L \}$ be the set of all legal actions for agent $i$ at state $w$;   
(v) $U : W \times \mathcal{A}^n \rightarrow W$ is an \textit{update} function, 
given $d\in \mathcal{A}^n$, let $d(i)$ be the individual action for agent $i$ in the joint action $d$;
(vi) $\pi_\Phi : W \to 2^\Phi $ is the valuation function for the state propositions;
(vii) $\pi_{Y}: W \times Y \to \mathbb{Z}$, is the valuation function  for the numerical variables.
\end{definition}

\begin{definition}
\label{def:path}

Given an ST-model $M = (W, \bar{w}, T, L, U,$ $ \pi_\Phi , \pi_Y)$, a \textit{path} is a sequence of states and joint actions  $\bar{w}\accentset{d_1}{\to} w_1 \accentset{d_2}{\to}\dots \accentset{d_t}{\to} w_t \accentset{d_{t+1}}{\to} \dots$ such that for any $t \geq 1$:   (i) $w_0 = \bar{w}$; (ii) $w_t \neq w_0$; (iii) $d_t(i) \in L(w_{t-1})$ for any $i \in N$, (iv) $w_t = U(w_{t-1}, d_t)$; and (v) if  $w_{t-1} \in  T$, then $w_{t-1} = w_t$.
\end{definition}

Let $\delta[t]$ denote the $t$-th reachable state of $\delta$, $\theta(\delta, t)$ denote the joint action performed at stage $t$ of $\delta$; and $\theta_i(\delta, t)$ denote the action of agent $i$ performed at stage $t$ of $\delta$, and $\delta[0, t]$ denote the finite prefix $\bar{w}\accentset{d_1}{\to} w_1 \accentset{d_2}{\to}\dots \accentset{d_t}{\to} w_t$.
A path $\delta$ is \textit{complete} if $\delta[e] \in T$, for some $e > 0$. After reaching a terminal state $\delta[e]$, for any $e' > e$, $\delta[e'] = \delta[e]$. 

\subsection{Language Syntax}

Each payment, allocation and trade should be represented as a numerical variable $y \in Y$.
We assume the predefined variables set $\{payment_i, alloc_{i,j}, trade_{i,j} : i \in N, j\in G\} \subseteq Y$. Let $z \in \mathcal{L}_z$ be a numerical term defined as follows:
\begin{align*} 
z::= z' \mid add(z, z) \mid sub(z,z) \mid min(z,z) \mid  max(z,z) \mid  times(z,z)\mid  
\\
 y \mid  win_{i,j}(\bar{\beta}, \bar{Z}) \mid value_i(\beta) \mid value_i(\beta, \bar{z}) \mid qtd_i(\beta, j)
\end{align*}
where $z' \in I, y\in Y$, $i\in N$, $j\in G$, $\beta \in \mathcal{A}$, $\bar{\beta} \in \mathcal{A}^n$  $\bar{z}\in \mathcal{L}_z^m$, and $\bar{Z} \in \mathcal{L}_z^{mn}$.  

The terms $add(z_1, z_2)$, $sub(z_1,z_2)$, $times(z_1$, $z_2)$, $min(z_1,z_2)$ and $max(z_1,z_2)$ specify the corresponding  mathematical operation or function.
For agent $i$ and good $j$, the value of a bid $\beta$, the value of $\beta$ given a trade $\bar{z}$, the quantity of $j$ in bid $\beta$ and the trade $\lambda_{i,j}(\bar{\beta}, \bar{Z})$ are denoted $value_i(\beta)$,  $value_i(\beta,\bar{z})$, $qtd_i(\beta, j)$ and $win_{i,j}$, resp.

The Combinatorial Exchange Description language is denoted by $\mathcal{L}_{CEDL}$ and a formula $\varphi$ in  $\mathcal{L}_{CEDL}$ is defined by the following BNF grammar: 
\begin{align*} 
\varphi ::= p \mid initial \mid terminal \mid legal_i(\beta) \mid does_i(\beta)  \mid 
\\  \neg \varphi \mid \varphi \land \varphi  \mid  \bigcirc \varphi  \mid  z < z  \mid z>z \mid z=z  
\mid 
rest_i(res, \beta)
\end{align*}
where $p \in \Phi$, $i\in N$, 
$res \in \{buyer, seller, good, unit\}$, $\beta \in \mathcal{A}$ and  $z \in \mathcal{L}_z$. 

Intuitively, $initial$ and $terminal$ specify the initial  terminal states, resp.; $legal_i(\beta)$ asserts that agent $i$ is allowed to take action $\beta$ at the current state and $does_i(\beta)$ asserts that agent $i$ takes action $\beta$ at the current state. The formula $\bigcirc \varphi$ means ``$\varphi$ holds at the next state''. The formulas $z_1 > z_2$, $z_1<z_2$, $z_1=z_2$ mean that a numerical term $z_1$ is greater, less and equal to a numerical term $z_2$, resp. 
The formula $rest_i(res, \beta)$ specifies whether the bid $\beta$ from agent $i$ respects the restriction $res \in \{buyer, seller, good, unit\}$. The restriction $buyer$ specifies that $\beta$ cannot have negative quantities or prices. Similarly, the restriction $seller$  specifies that $\beta$ cannot have positive quantities or prices. The restriction $good$ states that $\beta$ should be a leaf node. Finally, the restriction $unit$ says any leaf node in $\beta$ can only demand a single unit from a good type.
 
Other connectives $\lor, \to, \leftrightarrow, \top$ and $ \bot $ are defined by $\neg$ and $\land$ in the standard way.
The comparison operators $\leq$, $\geq$ and $\neq$ are defined by $\lor,  >, < $ and $ =$.  
The extension of the comparison operators $>, <, =$, $\leq$, $\geq$, $\neq$ and numerical terms $max(z_1, z_2), min(z_1, z_2)$, $add(z_1, z_2)$, $sub(z_1, z_2)$, $times(z_1, z_2)$ to multiple arguments is straightforward. 

We define $trade_i$ such that it denotes the list of numerical terms  representing $i$'s trades as follows:
$trade_i =_{def} trade_{i,1}, \dots, trade_{i,m}$. 
The list $alloc^i$ is defined in a similar way. Note that  $trade_i$, $ alloc_i$ $ \in \mathcal{L}_z^m$. 
Given a joint allocation $X$ and a list of bids $T$, 
we write $trade_i = win_i(X, T)$ to denote the formula $\bigwedge_{j\in G} trade_{i,j} = win_{i,j}(X, T)$.
Assume the bids $T = (T_1, \dots, T_n)$, where $T_i \in \mathcal{A}$ is a bid associated to the agent $i \in N$. The formula $does(T) =_{def} \bigwedge_{i \in N} does_i(T_i)$ represents that the agents perform the joint action $T$.

\subsection{Language Semantics}

The semantics for the CEDL language is given in two steps. First, we define Function $f$ to compute the meaning of numerical terms $z \in \mathcal{L}_z$ in some specific  state. Next, a formula $\varphi \in \mathcal{L}_{CEDL}$ is interpreted with respect to a step in a path.

\begin{definition}
\label{def:funcionv} 
Given an ST-model $M$, 
define Function $f:  \mathcal{L}_z \times W  \rightarrow \mathbb{Z}$, assigning any $z \in \mathcal{L}_z$ and state $w \in W$ to a number in $\mathbb{Z}$: 

If $z$ is in the form $add(z', z'')$, $sub(z', z'')$, $min(z', z'')$, $max(z', z'')$, or $times(z', z'')$,
then $f(z,w)$ is defined
through the application of the corresponding mathematical operators and functions over $f(z', w)$ and $f(z'', w)$.
Otherwise, $f(z,w)$ is defined as follows: 
\[f(z, w) = \begin{dcases}  z  \text{ if }  z \in \mathbb{Z}
\\ \pi_{Y}(w, z) \text{ if } z \in Y
\\ \lambda_{i,j}(T, X)  \text{ if }
\\ \quad z = win_{i,j}(T, X)
\end{dcases}
f(z, w) = 
\begin{dcases} 
v_i(T_i) \text{ if } z = value_i(T_i) 
\\ v_i(T_i, \bar{z}) \text{ if } z = value_i(T_i, \bar{z})
\\ q_i(T_i, j) \text{ if } z = qtd_i(T_i, j) 
\end{dcases}\]
\end{definition}

\begin{definition}
\label{def:semantics}
Let $M$ be an ST-Model. Given a path $\delta$ of $M$, a stage $t$ on $\delta$ and  
a formula $\varphi \in \mathcal{L}_{CEDL}$, we say $\varphi$ is true (or satisfied) at $t$ of $\delta$ under $M$, denoted by $M, \delta, t \models \varphi$, according to the following definition:
\[ \arraycolsep=1pt
\begin{array}{lllll}
M, \delta, t \models p &  
& \text{iff} & 
 &  p \in \pi_\Phi (\delta [t]) \\

M, \delta, t  \models \neg \varphi  &  &\text{iff}  &&  M, \delta, t \not\models \varphi \\

M, \delta, t \models \varphi_1 \land \varphi_2  &  &\text{iff}  &&  M, \delta, t \models \varphi_1 \text{ and } M, \delta, t \models \varphi_2 \\

M, \delta, t \models initial  &  &\text{iff}  &&  \delta[t] = \bar{w} \\

M, \delta, t  \models terminal  &  &\text{iff}  &&  \delta[t] \in T\\

M, \delta, t  \models legal_i(a)  &  &\text{iff}  &&  a \in L(\delta[t], i) 
\\

M, \delta, t \models  does_i(a)  &  &\text{iff}  &&  \theta_{i} (\delta, t) = a
\\

M, \delta, t \models \bigcirc \varphi  &  &\text{iff}  &&  
M, \delta, t+1 \models \varphi \\

M, \delta, t  \models  z_1 >z_2 &  & \text{iff} &  &  f(z_1, \delta [t]) > f(z_2, \delta [t]) \\

M, \delta, t \models  z_1 < z_2 &  & \text{iff} &  &   f(z_1, \delta [t]) 	< f(z_2, \delta [t])\\

M, \delta, t  \models  z_1 =  z_2 &  & \text{iff} &  &   f(z_1,\delta [t]) = f(z_2, \delta [t])\\

M, \delta, t \models rest_i(buyer, a)&  &\text{iff}  && \forall \beta \in Node(a), v_i(\beta) \geq 0 \text{ and } 
\\ && && 
 \forall  l \in Leaf(a), \exists j\in G, q_i(l, j) \geq 0
\\
M, \delta, t \models rest_i(seller, a)&  &\text{iff}  && \forall \beta \in Node(a), v_i(\beta) \leq 0 \text{ and } 
\\ && && 
 \forall  l \in Leaf(a), \exists j\in G, q_i(l, j) \leq 0
\\
M, \delta, t \models rest_i(good, a)&  &\text{iff}  && Child(a) = \{\}
\\
M, \delta, t \models rest_i(unit, a)&  &\text{iff}  && \forall \beta \in Leaf(a), \exists j\in G, q_i(\beta, j) \in \{-1, 1\} 
\end{array}\]
\end{definition}

A formula $\varphi$ is globally true through $\delta$, denoted by $M, \delta \models \varphi $, if $M, \delta, t \models \varphi$ for any stage $t$ of $\delta$. A formula $\varphi$ is \textit{globally true} in an ST-Model $M$, written $M \models \varphi$, if $M, \delta \models \varphi$ for all paths $\delta$ in $M$. 
Finally, let $\Sigma$ be a set of formulas in $\mathcal{L}_{CEDL}$, then $M$ is a \textit{model} of $\Sigma$ if $M \models \varphi$ for all $\varphi \in \Sigma$.

The following propositions show that if a player bids at a stage
in a path, then (i) she does not  bid anything else at the same stage and (ii) the bid is legal. Additionally, the value of the bid-tree $noop$ is zero. Notice that an agent bidding $noop$ does not imply her payment will be zero (\emph{e.g}, there may be fees for participating).

\begin{proposition}
\label{prop:general}
For each agent $i \in N$ and each bid-tree $\beta \in \mathcal{A}$,
\begin{enumerate}
    \item \label{prop:general1} $M \models does_i(\beta) \rightarrow \neg does_i(\beta')$, for any $\beta' \in \mathcal{A}$ such that $\beta' \neq \beta$
    \item \label{prop:general2} $M \models does_i(\beta) \rightarrow  legal_i(\beta)$
\end{enumerate}    
\end{proposition}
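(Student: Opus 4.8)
The plan is to reduce both claims to the definition of a \emph{path} (Definition~\ref{def:path}), since each joint action along a path is drawn coordinate-wise from the legality relation $L$, and the update function $U$ takes a single joint action as input. Recall that $M \models \varphi$ means $M, \delta, t \models \varphi$ for every path $\delta$ of $M$ and every stage $t$. So it suffices to fix an arbitrary path $\delta$, an arbitrary stage $t$, and an arbitrary agent $i$, assume $M, \delta, t \models does_i(\beta)$, and derive each consequent; the outermost quantification over paths and stages then gives the global validity.

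For part~(\ref{prop:general2}), first I would unfold $M, \delta, t \models does_i(\beta)$ via Definition~\ref{def:semantics} to obtain $\theta_i(\delta, t) = \beta$. By the notation fixed just after Definition~\ref{def:path}, $\theta_i(\delta, t)$ is the $i$-th component of the joint action $d_t$ labelling the transition $\delta[t-1] \accentset{d_t}{\to} \delta[t]$ in $\delta$ (with the convention that at a terminal state the path stutters). Clause~(iii) of Definition~\ref{def:path} states exactly that $d_t(i) \in L(\delta[t-1], i)$ for every $i \in N$ --- here I should note the mild typo in Definition~\ref{def:path} where $L(w_{t-1})$ is written for $L(w_{t-1}, i)$, but the intended meaning is the per-agent legal set $L(w, i) = \{a \mid (w,i,a)\in L\}$ from Definition~\ref{def:model2}. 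Hence $\beta = \theta_i(\delta, t) \in L(\delta[t], i)$, wait --- one must be careful about whether legality is evaluated at $\delta[t]$ or $\delta[t-1]$; the semantics of $does_i$ at stage $t$ refers to the action \emph{taken from} $\delta[t]$, so in fact $\theta_i(\delta, t)$ should be indexed so that it is the action on the transition \emph{out of} $\delta[t]$, and clause~(iii) then reads $\theta_i(\delta,t) \in L(\delta[t], i)$, which is precisely $M, \delta, t \models legal_i(\beta)$. So the implication holds.

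For part~(\ref{prop:general1}), fix $\beta' \neq \beta$ and suppose for contradiction that $M, \delta, t \models does_i(\beta) \land does_i(\beta')$. Unfolding the semantics of $does_i$ twice gives $\theta_i(\delta,t) = \beta$ and $\theta_i(\delta,t) = \beta'$, hence $\beta = \beta'$, contradicting $\beta' \neq \beta$; this establishes $M, \delta, t \models does_i(\beta) \to \neg does_i(\beta')$. The essential point is simply that $\theta_i(\delta, \cdot)$ is a \emph{function} of the stage, so an agent performs exactly one action per stage. The only subtlety worth flagging --- and the one place where a careless reader could stumble --- is the off-by-one indexing convention between the transition labels $d_1, d_2, \dots$ in the path and the stage index $t$ used in the satisfaction relation; once that bookkeeping is pinned down (the action $\theta_i(\delta,t)$ being the one labelling the edge leaving $\delta[t]$), both parts are immediate. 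I do not expect any genuine obstacle here: the proposition is essentially a sanity check that the semantics of $does_i$ is well-behaved with respect to the path structure.
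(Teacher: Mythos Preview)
Your proposal is correct and follows essentially the same route as the paper: unfold the semantics of $does_i$ to obtain $\theta_i(\delta,t)=\beta$, then use that $\theta_i(\delta,\cdot)$ is a function for part~(\ref{prop:general1}) and clause~(iii) of Definition~\ref{def:path} for part~(\ref{prop:general2}). Your discussion of the indexing convention (action labelling the edge \emph{leaving} $\delta[t]$) is in fact more careful than the paper's own proof, which silently relies on the same reading and contains a couple of evident typos.
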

\begin{proof}
For Statement \ref{prop:general1}, assume $M, \delta, t \models does_i(\beta)$ iff $\theta_r(\delta, t) = \beta$. Then for any $\beta' \neq \beta \in \mathcal{A}_{ce}$, $\theta_r(\delta, t) = \beta'$ and $M, \delta, t \models \neg does_i(\beta')$.

Let us verify Statement \ref{prop:general2}. Assume $M, \delta, t \models does_i(\beta)$, then $\theta_r(\delta,t) = \beta$ and by the definition of $\delta$, $\beta \in L(\delta[t], i)$ and $M, \delta, t \models legal_i(\beta)$.
\end{proof}

\begin{lemma}
\label{lemma:noop}
For each agent $i \in N$, each bid-tree $\beta \in \mathcal{A}$ and each $\bar{z} \in I^m$, $M \models value_i(noop) = 0 \land value_i(noop, \bar{z}) = 0$.     
\end{lemma}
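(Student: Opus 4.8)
The plan is to unfold the definitions of $value_i(noop)$ and $value_i(noop,\bar z)$ through Function $f$ and the definition of the trade value $v_i$. Recall $noop =_{def} \langle 0, g, 0\rangle$ for some arbitrary good $g$. By Definition~\ref{def:funcionv}, $f(value_i(noop), w) = v_i(noop)$ and $f(value_i(noop, \bar z), w) = v_i(noop, \bar z)$, so it suffices to show both of these equal $0$ at every state, which will give $M \models value_i(noop) = 0 \land value_i(noop,\bar z) = 0$ by the semantics of $=$ and $\land$.

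First I would handle $v_i(noop, \bar z)$, i.e., the total value of a trade $\bar z \in I^m$ given the bid-tree $noop$. Since $noop = \langle 0, g, 0\rangle$ is a leaf, $Node(noop) = \{noop\}$, $Leaf(noop) = \{noop\}$, and $Node(noop)\setminus Leaf(noop) = \emptyset$, so Rule~\ref{rule:R1} is vacuous. Rule~\ref{rule:R2} requires $q_i(noop, j)\, sat_i(noop) \leq \bar z_j$ for all $j\in G$; since $q_i(noop, g) = 0$ and $q_i(noop, j) = 0$ for $j\neq g$, the left-hand side is $0$, and because $\bar z \in I^m$ with $I$ containing $0$ and the sums range over nonnegative quantities — actually here I should be a little careful: Rule~\ref{rule:R2} is $0 \leq \bar z_j$, which need not hold for arbitrary $\bar z$. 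However, the objective $\sum_{\beta\in Node(noop)} v_i(\beta) sat_i(\beta) = v_i(noop)\, sat_i(noop) = 0 \cdot sat_i(noop) = 0$ regardless of whether $sat_i(noop)$ is $0$ or $1$, so the argmax value is $0$ whenever the feasible set is nonempty (e.g. $sat_i(noop) = 0$ is always feasible), hence $v_i(noop,\bar z) = 0$.

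For $v_i(noop)$ (the value $value_i(T_i)$ with no trade argument), I would appeal to the same computation: $v_i(noop) = \max_{sat_i} v_i(noop) sat_i(noop)$ over the admissible solutions, and since $v_i(noop) = 0$ as the value label of the leaf, the objective is identically $0$, so the maximum is $0$. In both cases the key fact is simply that the only node of $noop$ carries value $0$, so the weighted sum defining the WD-style objective is $0$ for every choice of $sat_i$.

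The main obstacle — really the only subtlety — is making sure the optimization problem defining $v_i(T_i,\bar z)$ is well-posed for $noop$, i.e. that its feasible region is nonempty so that the argmax is defined; this is immediate since $sat_i(noop) = 0$ satisfies the (vacuous) Rule~\ref{rule:R1} and Rule~\ref{rule:R2} (which becomes $0\leq \bar z_j$ only when $sat_i(noop)=1$, and is trivially $0\leq \bar z_j$ otherwise — wait, with $sat_i(noop)=0$ it reads $0\leq \bar z_j$, still the same; but the point is the objective is $0$ on the whole feasible set regardless). Since the objective value is constant and equal to $0$ on every feasible solution, no delicate case analysis on the sign of $\bar z_j$ is needed. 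Finally, since $w$ was arbitrary and the path/stage were arbitrary, $M \models value_i(noop) = 0 \land value_i(noop,\bar z) = 0$ follows. The whole argument is short; the care lies only in correctly reading off $Node$, $Leaf$, $q_i$, and $v_i$ for the degenerate leaf $noop$.
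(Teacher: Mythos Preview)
Your proof is correct and takes essentially the same route as the paper: both observe that $Node(noop)=\{noop\}$ with node value $0$, so the objective defining $v_i(noop,\bar z)$ is identically $0$ regardless of $sat_i$. One minor point: $value_i(noop)$ without a trade argument maps via Function~$f$ to the node value $v_i(noop)$ directly (which is $0$ by the leaf definition $\langle 0,g,0\rangle$), not to an optimization---the paper uses this directly, whereas you unnecessarily route it through an argmax, though this does not affect correctness.
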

\begin{proof}
We consider Statement \ref{prop:general1}. Remind $noop$ denotes a leaf bid $\langle 0,g,0\rangle$, where $g \in M$. Thus, $v_i(noop) = 0$, $f(value_i(noop), \delta[t]) = 0$ and $M, \delta, t \models value_i(noop) = 0$. Let $\bar{z}\in I^m$. The value of $\bar{z}$ given $noop$, i.e., $v_i(noop, \bar{z})$, is the maximal sum of $v_i(\beta)sat_i(\beta)$ in a solution $sat_i$, for all $\beta \in Node(noop)$. Since  $Node(noop) = \{noop\}$ and  $value_i(noop)=0$, for any solution $sat_i$,  $v_i(noop, \bar{z}) = 0$ and $M, \delta, t \models value_i(noop, \bar{z}) = 0$.
\end{proof}

\section{Representing Mechanism Properties and Auction-based Protocols}\label{sec:protocols}

Let us first show how to represent some classical but important properties from Mechanism Design, namely budget-balance,  no-deficit and individual rationality conditions.

\paragraph{Budget-Balanced Mechanisms}
 
A mechanism is \textit{Budget-Balanced} (BB) if the cumulative payment among the bidders is zero, for every valuations they may have \cite{mishra2018simple}. Given an ST-model $M$, this condition is denoted by the validity    of the following formula:
$BB =_{def} add(payment_1, \dots, payment_n) =0$.

\paragraph{No-deficit Mechanisms}
A mechanism where only the designer can earn revenue  satisfies \textit{no-deficit} \cite{mishra2018simple}. The no-deficit condition is a relaxation from BB, where the cumulative payment among the bidders cannot be negative. 
In CEDL, an ST-model $M$ satisfies the no-deficit condition according to the validity of the following formula: 
$noDeficit =_{def} add(payment_1, \dots, payment_n) \geq 0$.

 \paragraph{Individual Rationality}

A mechanism is individually rational if agents can always achieve at least as much utility as from participating as without participating  \cite{parkes2001iterative}. To represent such condition, we assume each agent $i \in N$ has a private valuation in $I$ for each individual trade $\lambda_i \in I^m$, denoted $\vartheta_i(\lambda_i)$.
As \cite{lubin2008ice}, we also assume the agents have monotonic valuation, so that  $\vartheta_i(\lambda_i') \geq \vartheta_i(\lambda_i)$, for any trade $\lambda_i' \geq \lambda_i$ 
(\emph{i.e},  $\vartheta_{i,j}(\lambda_{i,j}') \geq \vartheta_{i,j}(\lambda_{i,j}$), for each $j$). The agent's utility is quasi-linear, denoted $u_i(\lambda_i, p_i) = \vartheta_i(\lambda_i) - p_i$, where $p_i$ denotes $i$'s payment. 
Rephrased in terms of ST-model, we say a model $M$ is \textit{Individual Rational} $IR$ if it is Individual Rational 
$IR_i$ for each agent $i \in N$ in each path  $\delta$ in $M$ and stages $t$ in $\delta$. 
 
A stage $t$ of $\delta$ is $IR_i$, written $M, \delta, t \models IR_i$ if there is a path $\delta'$ in $M$ such that $\delta[0, t] = \delta'[0, t]$,  $\theta_r(\delta, t) = \theta_r(\delta', t)$, for all $r \in N\setminus \{i\}$,  and 
$M, \delta', t \models utility_i(trade_i, payment_i) = x \rightarrow \bigcirc  utility_i(trade_i, payment_i) \geq x$, for each $x\in I$. In other words, IR requires meta-reasoning as choices among paths have to be considered. 

\smallskip
Let us now represent in CEDL two types of auction-based markets: a One-Shot Combinatorial Exchange and a Simultaneous Ascending Auction. For both of them, we detail the rules representation, the semantic representation and we revisit the Mechanism Design conditions.

\subsection{Representing a Combinatorial Exchange}

To represent a One-Shot Combinatorial Exchange with multiple units of $\mathsf{m}$ goods and $\mathsf{n}$ players, 
we first describe the auction signature, written $\mathcal{S}_{ce} = (N_{ce},G_{ce}, \mathcal{A}_{ce}, \Phi_{ce}, Y_{ce}, I_{ce})$, where $N_{ce} = \{1, \dots, \mathsf{n}\}$, $G_{ce} = \{1, \dots, \mathsf{m}\}$, $\mathcal{A}_{ce} \subseteq \mathcal{L}_{TBBLf}$, $\Phi_{ce} = \{bidRound\}$, 
$Y_{ce} = \{alloc_{i,j}, trade_{i,j}$, $payment_i : i\in N_{ce}, j\in G_{ce}\}$, and $I_{ce} \subset \mathbb{Z}$.

Each instance of a One-Shot Combinatorial Exchange is specific and is defined with respect to $\mathcal{A}_{ce}$, $I_{ce}$ and the constant values $\mathsf{n,m} \in I_{ce,+}\setminus \{0\}$ (the size of $N_{ce}$ and $G_{ce}$, resp.), and $\mathsf{x}_{i,j} \in I_{ce, +}$, for each $i\in N_{ce}$ and $j\in G_{ce}$. 
Each constant  $\mathsf{x}_{i,j}$ represents the quantity of units of $j$ initially held by agent $i$. 
The rules of a One-Shot Combinatorial Exchange are  represented by CEDL-formulas as shown in Figure \ref{fig:CE}. 

\begin{figure}[ht]
\centering
\begin{mdframed}
\begin{enumerate} 
    \item  \label{ce:initial}
    $initial \rightarrow bidRound \land \bigwedge_{i\in N_{ce}} 
(    payment_i = 0$ $\land $
    $ \bigwedge_{j\in G_{ce}}trade_{i,j} = 0)$
   
    \item \label{ce:ter1} $terminal \leftrightarrow \neg initial$

    \item \label{ce:legal1} $terminal \rightarrow legal_i(noop)$, for each $i\in N_{ce}$
    
    \item \label{ce:legal2} $ initial \rightarrow legal_i(\beta)$, for each $i\in N_{ce}$, $\beta \in \mathcal{A}_{ce}$ 
    
    \item \label{ce:alloc}
    $does(T_1, \dots, T_\mathsf{n})\land initial \rightarrow \bigcirc ( \bigwedge_{i \in N_{ce}} trade_i = win_i(T_1, \dots, T_\mathsf{n}, \mathsf{x}_{1, 1}, \dots, \mathsf{x}_{\mathsf{n,m}}))$, for each $(T_1$, $\dots$, $T_\mathsf{n}) \in \mathcal{A}^\mathsf{n}_{ce}$
        
    \item \label{ce:payment} $does_i(\beta) \land initial \rightarrow \bigcirc payment_i = value_i(\beta$, $trade_i)$, for each $i \in N_{ce}$, $\beta \in \mathcal{A}_{ce}$

    \item     
    \label{ce:loop}     
    $terminal \land y = x \rightarrow \bigcirc y= x$, for each $y \in Y_{ce}$, $x \in I_{ce}$ 
    
    \item \label{ce:allocvar} $alloc_{i,j} = add(\mathsf{x}_{i,j}, trade_{i,j}) $, for each $i \in N_{ce}$, $j\in G_{ce}$ 

    \item \label{ce:bidRound} $\bigcirc \neg bidRound$ 
\end{enumerate}
\end{mdframed}
\Description{}
\caption{A Combinatorial Exchange represented by $\Sigma_{ce}$}
\label{fig:CE}
\end{figure}

In the initial state, the trade and payment are zero for every agent and good (Rule \ref{ce:initial}). Any state that is not  initial is  terminal (Rule \ref{ce:ter1}). 
The proposition $bidRound$ helps to distinguish the initial state from the terminal state where no trade or payment were assigned to any agent (\emph{e.g},  when all agents bid $noop$). 
Once in a terminal state, players can only do $noop$. Otherwise, they can bid any bid-tree $\beta \in \mathcal{A}_{ce}$ (Rules \ref{ce:legal1} and \ref{ce:legal2}). 
If a list of bid-trees is the joint action performed in the initial sate, then in the next state each agent receives an individual trade, which is assigned by the WD over the initial allocations and the bid-trees (Rule \ref{ce:alloc}). 
After performing a bid in the initial state, the payment for an agent will be the value of her trade given her bid (Rule  \ref{ce:payment}). 
No numerical variable has its value changed after reaching a terminal state 
(Rule \ref{ce:loop}). The allocation for an agent is the quantity of goods she initially held plus her trade (Rule \ref{ce:allocvar}). Finally, the proposition $bidRound$ is always false in the next state (Rule \ref{ce:bidRound}).
 
\paragraph{Representing as a model}
Next, we address the model representation. 
Let $\mathscr{M}_{ce}$ be the set of ST-models $M_{ce}$ defined for any 
$\mathcal{A}_{ce} \subseteq \mathcal{L}_{TBBLf}$, $I_{ce} \subset \mathbb{Z}$, and the 
constants $\mathsf{m}, \mathsf{n}\in I_{ce,+}\setminus \{0\}$ and $\mathsf{x}_{i,j}\in I_{ce,+}$, for each $i \in N_{ce}$ and $j\in G_{ce}$. 

Each $M_{ce}$ is defined as: 
\begin{itemize}
    \item $W_{ce} = \{\langle b, x_{1,1}, \dots, x_\mathsf{n,m}, \lambda_{1,1}, \dots, \lambda_\mathsf{n,m}, p_1, \dots, p_\mathsf{n}\rangle : b \in \{0$, $1\}, x_{i,j} \in I_{ce, +}$ \& $p_i, \lambda_{i,j} \in I_{ce} $ \& $ i \in N_{ce} $ \& $ j\in G_{ce}\}$; 
    
    \item  $\bar{w}_{ce} = \langle 1, \mathsf{x}_{1,1},  \dots,  \mathsf{x}_\mathsf{n,m}, 0, \dots, 0, 0,\dots, 0\rangle$;

    \item $T_{ce}= \{w:  w\in W_{ce} $ \& $w\neq\bar{w}_{ce}\}$;
    
    \item $L_{ce}=  \{(w, i, noop): i \in N_{ce} $ \& $ w \in T_{ce}\}  \cup  \{ ( \bar{w}_{ce}, i,$ $ \beta) : \beta \in \mathcal{A}_{ce} $ \& $i\in N_{ce}\}$;

    \item $U_{ce}$ is defined as follows: for all $w = \langle b, x_{1,1}, \dots, x_\mathsf{n,m},  \lambda_{1,1}, \dots, $ $\lambda_\mathsf{n,m}, p_1, \dots, p_\mathsf{n}\rangle \in W_{ce}$ and for all 
    $d  \in \mathcal{A}^\mathsf{n}_{ce}$: 
    \begin{itemize}
        \item If $w = \bar{w}_{ce}$, then  
        $U_{ce}(w, d) =  \langle 0, x_{1,1}', \dots, x_\mathsf{n,m}', \lambda_{1,1}', \dots$, $\lambda_\mathsf{n,m}', p_1',  \dots,  p_\mathsf{n}'\rangle$, where 
        $\lambda_{i,j}' = \lambda_{i,j}(d, \mathsf{x}_{1,1}, \dots, \mathsf{x}_\mathsf{n,m})$; 
        $x_{i,j}' = \mathsf{x}_{i,j} +  \lambda_{i,j}'$; and 
        $p_i' = v_i(d(i), \lambda_{i,1}' , \dots, \lambda_{i,\mathsf{m}}')$, for each $i\in N_{ce}$ and $j\in G_{ce}$;  
        
        \item Otherwise, $U_{ce}(w,d) = w$.
    \end{itemize}

    \item For each $w \in W_{ce}$, $ i \in N_{ce}$ and $j\in G_{ce}$, 
    $\pi_{Y, ce}(w$, $trade_{i,j}) = \lambda_{i,j}$;  $\pi_{Y, ce}(w, alloc_{i,j}) = x_{i,j}$; $\pi_{Y, ce}(w, payment_i) = p_i$; and $\pi_{\Phi, ce}(w) = \{bidRound : b = 1\}$.

\end{itemize}

Hereafter, we assume an instance of $M_{ce} \in \mathscr{M}_{ce}$ and $\Sigma_{ce}$ for some $\mathcal{A}_{ce} \subseteq \mathcal{L}_{TBBLf}$, $I_{ce} \subset \mathbb{Z}$, $\mathsf{m}, \mathsf{n} \in I_{ce, +}\setminus \{0\}$ and $ \mathsf{x}_{i,j} \in I_{ce, +}$, where $i\in N_{ce}$, $j\in G_{ce}$.

\begin{example}
\label{example:ce}
Let $M_{ce} \in \mathscr{M}_{ce}$, where (i) $\mathsf{n} = 2$ and the agents are denoted by $r1$ and $r2$, (ii) $\mathsf{m} = 2$ and the good types are denoted by $\mathsf{a}$ and $\mathsf{b}$,  and (iii)  $\mathsf{x}_{r1,\mathsf{a}} =0$, $\mathsf{x}_{r1,\mathsf{b}} =1$, $\mathsf{x}_{r2,\mathsf{a}} =2$ and $\mathsf{x}_{r2,\mathsf{b}} =0$, \emph{i.e}, at the beginning of the auction, agent $r1$ has $1$ unit of $\mathsf{b}$ and agent $r2$ has $2$ units of $\mathsf{a}$. 
Figure~\ref{fig:ceflow} illustrates a path in $M_{ce}$, where the agents perform the bids previously introduced in Figure~\ref{fig:tbbl}. 
In state $w_0$, all the payments and trades are zero. Their joint bid leads to state $w_1$, where the joint trade obtained by the winner determination is $(2, -1,-2,1)$. The tie-breaking ensures that the joint trade is unique. Given the joint trade, the allocation for agent $r1$ is $2$ units of $\mathsf{a}$ and the allocation for agent $r2$ is $1$ unit of $\mathsf{b}$. Since $w_1$ is terminal, the agents can only bid $noop$ and the state that succeeds $w_1$ is $w_1$ itself. 

\begin{figure}[ht]
\centering
\includegraphics[width=1\columnwidth]{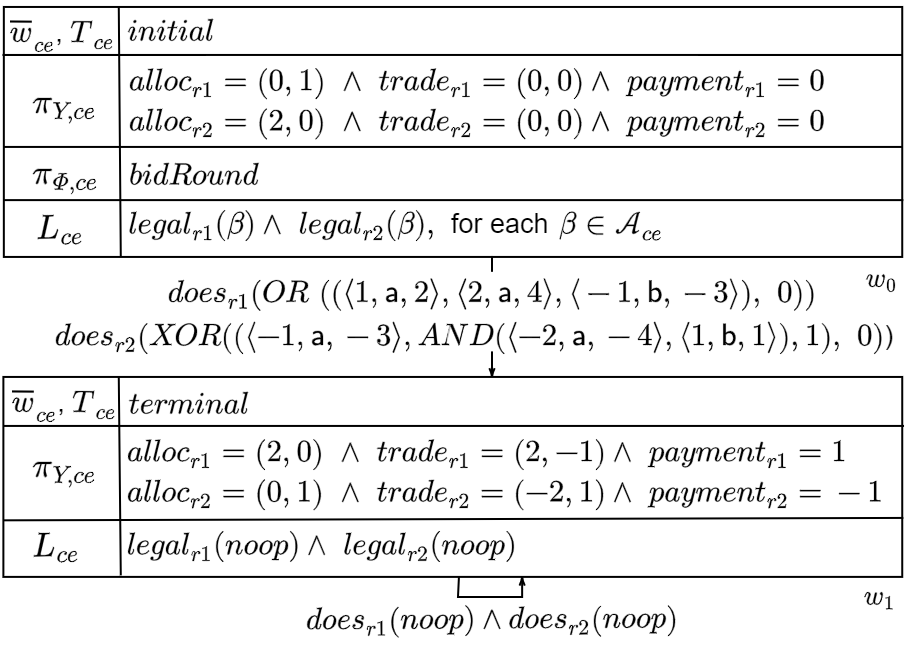}
\caption{A Path in $M_{ce}$, with 2 bidders and $2$ goods} \label{fig:ceflow}
\Description{}
\end{figure}

\end{example}

Let us now evaluate the protocol. First, 
Proposition  \ref{prop:modelce} shows that $M_{ce}$ is a sound representation of $\Sigma_{ce}$ . 

\begin{proposition}
\label{prop:modelce}
$M_{ce}$ is an ST-model and it is a model of  $\Sigma_{ce}$.
\end{proposition}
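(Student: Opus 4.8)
The plan is to verify the claim in two stages, exactly as the statement splits: first that $M_{ce}$ as defined is a legitimate ST-model in the sense of Definition \ref{def:model2}, and second that every formula listed in $\Sigma_{ce}$ (Figure \ref{fig:CE}) is globally true in $M_{ce}$. For the first stage I would simply run down the seven components of Definition \ref{def:model2} and check well-definedness: $W_{ce}$ is nonempty and finite because all its coordinates range over finite sets ($\{0,1\}$, $I_{ce,+}$, $I_{ce}$); $\bar{w}_{ce}\in W_{ce}$ since $\mathsf{x}_{i,j}\in I_{ce,+}$ and $0\in I_{ce}$; $T_{ce}\subseteq W_{ce}$ trivially; $L_{ce}\subseteq W_{ce}\times N_{ce}\times\mathcal{A}_{ce}$ by construction, noting that $noop\in\mathcal{A}_{ce}$ needs to hold (it does, as $\mathcal{A}_{ce}\subseteq\mathcal{L}_{TBBLf}$ and $noop$'s equivalence class is present). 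The one genuinely substantive point here is that $U_{ce}$ is \emph{total} and actually lands in $W_{ce}$: one must check that $\lambda'_{i,j}=\lambda_{i,j}(d,\mathsf{x})\in I_{ce}$, that $x'_{i,j}=\mathsf{x}_{i,j}+\lambda'_{i,j}\in I_{ce,+}$, and $p'_i\in I_{ce}$. The non-negativity of $x'_{i,j}$ follows from feasibility constraint \ref{const:C1} of the WD problem, and the range containments follow from the standing assumption that $I$ is large enough to bound all framework components including trades, allocations and cumulative money; I would cite that assumption explicitly. Finally $\pi_{\Phi,ce}$ and $\pi_{Y,ce}$ are well-defined functions by inspection.

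For the second stage I would go through Rules \ref{ce:initial}--\ref{ce:bidRound} one at a time, in each case fixing an arbitrary path $\delta$ and stage $t$ and unfolding Definition \ref{def:semantics}. Rules \ref{ce:ter1}, \ref{ce:legal1}, \ref{ce:legal2} and \ref{ce:bidRound} are almost immediate from the definitions of $\bar{w}_{ce}$, $T_{ce}$, $L_{ce}$ and $\pi_{\Phi,ce}$: a state is initial iff it is $\bar{w}_{ce}$ iff it is not in $T_{ce}$, and the legality relation is literally a restatement of Rules \ref{ce:legal1}--\ref{ce:legal2}; the $bidRound$ flag is set only in $\bar{w}_{ce}$, which no state reachable via a transition equals, so $\bigcirc\neg bidRound$ holds. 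Rule \ref{ce:initial} follows because the only initial state is $\bar{w}_{ce}$, where by definition $b=1$, all $\lambda_{i,j}=0$, all $p_i=0$, so $bidRound$ holds and the numerical reads are $0$. Rule \ref{ce:loop} follows from the "otherwise" branch of $U_{ce}$ (terminal states are fixed points), together with the fact that the path's transition out of a terminal state is forced to be the self-loop by clause (v) of Definition \ref{def:path}; Rule \ref{ce:allocvar} follows since in every $w\in W_{ce}$ we have $x_{i,j}=\mathsf{x}_{i,j}+\lambda_{i,j}$ — this holds in $\bar{w}_{ce}$ ($\mathsf{x}_{i,j}+0$), is re-established by $U_{ce}$ on the initial transition ($x'_{i,j}=\mathsf{x}_{i,j}+\lambda'_{i,j}$), and is preserved by self-loops; one should note it holds for \emph{all} states of $W_{ce}$ that are actually reachable, which is what matters for global truth.

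The heart of the argument is Rules \ref{ce:alloc} and \ref{ce:payment}, the allocation and payment rules. Here I would take a path $\delta$ and stage $t$ with $\delta[t]=\bar{w}_{ce}$ (the hypothesis $initial$) and $\theta(\delta,t)=d=(T_1,\dots,T_\mathsf{n})$ (the hypothesis $does(T_1,\dots,T_\mathsf{n})$), so that $\delta[t+1]=U_{ce}(\bar{w}_{ce},d)$. By definition of $U_{ce}$ on $\bar{w}_{ce}$, the trade coordinates of $\delta[t+1]$ are $\lambda'_{i,j}=\lambda_{i,j}(d,\mathsf{x}_{1,1},\dots,\mathsf{x}_\mathsf{n,m})$, and one must check this matches the semantics of $trade_i = win_i(T_1,\dots,T_\mathsf{n},\mathsf{x}_{1,1},\dots,\mathsf{x}_\mathsf{n,m})$, i.e.\ that $f(win_{i,j}(T,X),\delta[t+1])=\lambda_{i,j}(T,X)$ — which is exactly the third clause of Definition \ref{def:funcionv}. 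For Rule \ref{ce:payment}, the payment coordinate of $\delta[t+1]$ is $p'_i=v_i(d(i),\lambda'_{i,1},\dots,\lambda'_{i,\mathsf{m}})$, and one reads off from Definition \ref{def:funcionv} (the $value_i(T_i,\bar z)$ clause) that $f(value_i(\beta,trade_i),\delta[t+1])=v_i(\beta,\lambda'_{i,1},\dots,\lambda'_{i,\mathsf{m}})$, using that $\beta=d(i)=T_i$ by the $does_i(\beta)$ hypothesis and that $trade_i$ evaluated at $\delta[t+1]$ gives precisely $(\lambda'_{i,1},\dots,\lambda'_{i,\mathsf{m}})$. I expect the main obstacle to be bookkeeping-level rather than conceptual: making sure the WD tie-breaking order makes $\lambda(T,X)$ a genuine function (so $U_{ce}$ is well-defined), that the various notational shorthands ($trade_i$, $win_i(X,T)$, $does(T)$) are unfolded correctly, and that one is careful that the formulas must hold at \emph{every} stage of \emph{every} path — in particular for the rules with no antecedent (\ref{ce:allocvar}) and for stages sitting in the terminal self-loop, where one invokes the fixed-point behaviour of $U_{ce}$ and clause (v) of Definition \ref{def:path}.
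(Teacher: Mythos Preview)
Your proposal is correct and follows essentially the same approach as the paper's proof, which is presented as a sketch: check that $M_{ce}$ meets Definition~\ref{def:model2}, then verify each rule in $\Sigma_{ce}$ by fixing an arbitrary path $\delta$ and stage $t$ and unfolding the semantics. Your plan is in fact more thorough than the paper's sketch---you make explicit the well-definedness of $U_{ce}$ (via \ref{const:C1} and the standing size assumption on $I$), the role of the tie-breaking order, and the reachability caveat for Rule~\ref{ce:allocvar}---all of which the paper leaves implicit under ``it is routine to check'' and ``the remaining rules are verified in a similar way.''
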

\begin{proof}\emph{(Sketch)} 
It is routine to check that $M_{ce}$ is actually an ST-model.  Given a path $\delta$ in $M_{ce}$ and a stage  $t$ of $\delta$, we need to show that $M_{ce},\delta, t \models \varphi$, for each $\varphi \in \Sigma_{ce}$.  

Let us verify Rule \ref{ce:initial}. Assume $M_{ce}, \delta, t \models initial$, then $\delta[t] = \bar{w}_{ce}$, i.e., $\delta[t] = \langle 1, \mathsf{x}_{1,1}, \dots, \mathsf{x}_{\mathsf{n,m}}, 0, \dots, 0, 0, \dots, 0 \rangle$. By the definitions of $\pi_{Y, ce}$ and $\pi_{\Phi, ce}$, 
$\pi_{\Phi, ce}(\bar{w}_{ce}) = \{bidRound\}$,  $\pi_{Y, ce}(\bar{w}_{ce}, payment_i) = 0$ and $\pi_{Y, ce}(\bar{w}_{ce}, trade_{i,j}) = 0$ for all $i \in N_{ce}$ and $j\in G_{ce}$. Thus, $M_{ce}, \delta, t \models bidRound \land \bigwedge_{i \in N_{ce}} payment_i = 0 \land \bigwedge_{j\in G_{ce}} trade_{i,j} = 0$.

Now we verify Rule \ref{ce:legal2}. Assume $M_{ce}, \delta, t \models initial$, then $\delta[t] = \bar{w}_{ce}$ and for all $i \in N_{ce}$ and $\beta \in \mathcal{A}_{ce}$, $(\bar{w}_{ce}, i, \beta) \in L_{ce}$. Thus, $M_{ce}, \delta, t \models legal_i(\beta)$.

Then we consider Rule \ref{ce:alloc}. $M_{ce}, \delta, t \models does(T_1, \dots, T_\mathsf{n}) \land initial$, for  $(T_1, \dots, T_\mathsf{n}) \in \mathcal{A}^\mathsf{n}_{ce}$, i.e., $M_{ce}, \delta, t \models \bigwedge_{i \in N_{ce}} does(T_i)$ and $M_{ce}, \delta, t \models initial$. Thus, $\theta_i(\delta, t) = T_i$ and $\delta[t] = w_{ce}$, for all $i \in N_{ce}$. The update function $U_{ce}$ defines $\delta[t+1]$ such that $\pi_{Y,ce}(\delta[t+1], trade_{i,j}) = \lambda_{i,j}(T_1, \dots, T_\mathsf{n}, \mathsf{x}_{1,1}, \dots, \mathsf{x}_{\mathsf{n,m}})$, for each $i \in N_{ce}$ and $j\in G_{ce}$. Thus, $M_{ce}, \delta, t+1 \models \bigwedge_{i \in N_{ce}, j\in G_{ce}} trade_{i,j} = win_{i,j}(T_1, \dots$, $T_\mathsf{n}, \mathsf{x}_{1,1}, \dots$, $\mathsf{x}_{\mathsf{n,m}})$  and also $M_{ce}, \delta, t \models \bigcirc (\bigwedge_{i \in N_{ce}, j\in G_{ce}} trade_{i,j} = win_{i,j}(T_1, \dots$, $T_\mathsf{n}$, $\mathsf{x}_{1,1}, \dots, \mathsf{x}_{\mathsf{n,m}}))$. Using the abbreviation for individual trades, $M_{ce}, \delta$, $t \models \bigcirc(\bigwedge_{i \in N_{ce}} trade_i = win_i(T_1, \dots, T_\mathsf{n}, \mathsf{x}_{1,1}, \dots$, $\mathsf{x}_{\mathsf{n,m}}))$. 

Finally, we consider Rule \ref{ce:allocvar}. 
Let $i \in N_{ce}$ and $j\in G_{ce}$. 
If $t = 0$, then $\delta[t] = \bar{w}_{ce}$. By the valuation function $\pi_{Y, ce}$, $\pi_{Y, ce}(\bar{w}_{ce}, alloc_{i,j}) = \mathsf{x}_{i,j}$ and $\pi_{Y, ce}(\bar{w}_{ce}, trade_{i,j}) = 0$. 
If $t = 1$, then by the path definition we have $\delta[t] = U_{ce}(\bar{w_{ce}}, d)$, for some $d \in \mathcal{A}_{ce}^\mathsf{n}$. The update function  $U_{ce}$ defines $\pi_{Y, ce}(\bar{w}_{ce}, alloc_{i,j}) = \mathsf{x}_{i,j} + trade_{i,j}$. Otherwise, for any $t>1$, $\delta[t] \in T_{ce}$ (see Rule \ref{ce:loop} and path definition). Thereby, $U_{ce}(\delta[t], d) = \delta[t]$, for some joint action $d$.
Thus, $M_{ce}, \delta, t \models alloc_{i,j} = add(\mathsf{x}_{i,j}, trade_{i,j})$.
 
The remaining rules are verified in a similar way.
\end{proof}

Next, we focus on general protocol properties: termination and playability.  
Each state succeeding the initial is terminal. 
Thus, $M_{ce}$ satisfies the termination condition from General Game Playing (GGP) \cite{ZhangAI2018}, that is, each path in  $M_{ce}$ reaches a terminal state. 
It follows that agents can only bid once (except by $noop$).  

\begin{proposition}
For each agent $i \in N_{ce}$ and bid-tree $\beta'$,
\label{prop:cefini}
\begin{enumerate}
    \item \label{prop:cefini1}$M_{ce} \models initial \rightarrow \bigcirc terminal$
    \item \label{prop:cefini2} $M_{ce} \models legal_i(\beta) \rightarrow \bigcirc \neg  legal_i(\beta')$, for any $\beta' \in \mathcal{A}_{ce}$ such that $\beta' \neq noop$
\end{enumerate}
\end{proposition}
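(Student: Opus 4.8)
The plan is to argue directly from the explicit construction of $M_{ce}$, in the same style as the proof of Proposition~\ref{prop:modelce}. The structural fact I would lean on is that the first coordinate $b$ of a state of $M_{ce}$ acts as a ``round flag'': it equals $1$ only in the initial state $\bar{w}_{ce}$, and the update function $U_{ce}$, whenever its first argument is $\bar{w}_{ce}$, always outputs a state whose flag is $0$. Together with $T_{ce} = W_{ce}\setminus\{\bar{w}_{ce}\}$, this makes every successor of the initial state terminal; and since $L_{ce}(w,i) = \{noop\}$ for every terminal $w$, nothing but $noop$ can be legal there. Both statements then fall out immediately.

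For Statement~\ref{prop:cefini1}, I would fix a path $\delta$ of $M_{ce}$ and a stage $t$ with $M_{ce}, \delta, t \models initial$, so $\delta[t] = \bar{w}_{ce}$. By clause~(iv) of Definition~\ref{def:path}, $\delta[t+1] = U_{ce}(\delta[t], d)$ where $d = \theta(\delta, t+1) \in \mathcal{A}_{ce}^{\mathsf{n}}$; since $\delta[t] = \bar{w}_{ce}$, the definition of $U_{ce}$ on the initial state gives a state with flag $b = 0$, which therefore differs from $\bar{w}_{ce}$ and lies in $T_{ce}$. Hence $M_{ce}, \delta, t+1 \models terminal$, i.e.\ $M_{ce}, \delta, t \models \bigcirc terminal$; generalizing over $\delta$ and $t$ yields $M_{ce} \models initial \rightarrow \bigcirc terminal$.

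For Statement~\ref{prop:cefini2}, I would fix $i \in N_{ce}$, bid-trees $\beta$ and $\beta' \in \mathcal{A}_{ce}$ with $\beta' \neq noop$, a path $\delta$ and a stage $t$ with $M_{ce}, \delta, t \models legal_i(\beta)$, i.e.\ $\beta \in L_{ce}(\delta[t],i)$. Inspecting the two components of $L_{ce}$, either (a) $\delta[t] = \bar{w}_{ce}$, in which case the argument above gives $\delta[t+1] \in T_{ce}$; or (b) $\delta[t] \in T_{ce}$ (in which case necessarily $\beta = noop$), and then clause~(v) of Definition~\ref{def:path} gives $\delta[t+1] = \delta[t] \in T_{ce}$. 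In either case $\delta[t+1] \in T_{ce}$, so the only pairs of $L_{ce}$ with state $\delta[t+1]$ are of the form $(\delta[t+1], i, noop)$, whence $L_{ce}(\delta[t+1], i) = \{noop\}$. Since $\beta' \neq noop$, $\beta' \notin L_{ce}(\delta[t+1],i)$, so $M_{ce}, \delta, t+1 \not\models legal_i(\beta')$, i.e.\ $M_{ce}, \delta, t \models \bigcirc \neg legal_i(\beta')$; generalizing proves $M_{ce} \models legal_i(\beta) \rightarrow \bigcirc \neg legal_i(\beta')$.

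I do not expect a genuine obstacle: the proposition is an immediate consequence of the ``flag flips to $0$ after one step'' behaviour of $U_{ce}$ together with terminal states being absorbing and admitting only $noop$. The only point needing mild care is the case split in Statement~\ref{prop:cefini2} — one must remember that $legal_i(\beta)$ can hold either because $\delta[t]$ is still $\bar{w}_{ce}$ or because $\delta[t]$ is already a (self-looping) terminal state — and invoking clause~(v) of the path definition in the latter case to establish $\delta[t+1]\in T_{ce}$.
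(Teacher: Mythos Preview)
Your proof is correct and follows the same outline as the paper's; the one difference is that for Statement~\ref{prop:cefini1} the paper simply invokes clause~(ii) of Definition~\ref{def:path} (namely $\delta[j] \neq \bar{w}_{ce}$ for all $j \geq 1$) to get $\delta[t+1]\in T_{ce}$, rather than computing $U_{ce}(\bar{w}_{ce}, d)$ and reading off the flag. (One small slip: the joint action in $\delta[t+1] = U_{ce}(\delta[t], d)$ is $d = \theta(\delta, t)$, not $\theta(\delta, t+1)$.)
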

\begin{proof}
Given a path $\delta$ in  $M_{ce}$ and a stage $t$ of $\delta$. 
Let us verify Statement \ref{prop:cefini1}. Assume $M_{ce}, \delta, t \models initial$. Then, $\delta[t] = \bar{w}_{ce}$. By the path definition, for any $j\geq 1$, $\delta[j] \neq \bar{w}_{ce}$. By the construction of $T_{ce}$, we have $T_{ce} = W_{ce}\setminus \{\bar{w}_{ce}\}$. Thus, $M_{ce}, \delta, t+1 \models terminal$ and 
$M_{ce}, \delta, t \models \bigcirc terminal$.

Now we verify Statement \ref{prop:cefini2}. Assume $M_{ce}, \delta, t \models legal_i(\beta)$. From Statement \ref{prop:cefini1} and since the path construction defines a loop in any terminal state (i.e, if $\delta[t] \in T_{ce}$ then $\delta[t] = \delta[t+1]$), we have that $\delta[t+1] \in T_{ce}$. Thus, $L_{ce}(\delta[t+1], i) = \{noop\}$ and $M_{ce}, \delta, t+1 \models \neg legal(\beta')$, for any $\beta' \in \mathcal{A}_{ce}$ such that $\beta' \neq noop$. 
\end{proof}

$M_{ce}$ satisfies the playability condition from GGP \cite{ZhangAI2018}, \emph{i.e}, there is always at least one legal action for each agent in any non-terminal stage of a path. 

\begin{proposition}
\label{prop:playability}
For each agent $i \in N_{ce}$,
$M_{ce} \models \bigvee_{a \in \mathcal{A}_{ce}}legal_i(a)$.
\end{proposition}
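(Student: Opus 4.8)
The plan is to argue directly from the construction of $M_{ce}$ by a case split on the current state. Fix a path $\delta$ in $M_{ce}$ and a stage $t$ of $\delta$; by the semantics of disjunction it suffices to produce some $a \in \mathcal{A}_{ce}$ with $a \in L_{ce}(\delta[t], i)$, i.e.\ $M_{ce}, \delta, t \models legal_i(a)$. Since $\delta[t] \in W_{ce}$ and, by the definition of $T_{ce}$, we have $W_{ce} = \{\bar{w}_{ce}\} \cup T_{ce}$ with the two parts disjoint, exactly one of the following two cases applies.

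First, if $\delta[t] = \bar{w}_{ce}$: by the second clause in the definition of $L_{ce}$, $(\bar{w}_{ce}, i, \beta) \in L_{ce}$ for every $\beta \in \mathcal{A}_{ce}$; since $\mathcal{A}_{ce}$ is nonempty (Definition~\ref{def:sig}), choosing any such $\beta$ yields $\beta \in L_{ce}(\delta[t], i)$, hence $M_{ce}, \delta, t \models legal_i(\beta)$. Second, if $\delta[t] \in T_{ce}$: by the first clause in the definition of $L_{ce}$, $(\delta[t], i, noop) \in L_{ce}$, so $noop \in L_{ce}(\delta[t], i)$ and $M_{ce}, \delta, t \models legal_i(noop)$; this $legal_i(noop)$ is one of the disjuncts of $\bigvee_{a \in \mathcal{A}_{ce}} legal_i(a)$ because $noop \in \mathcal{A}_{ce}$. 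In both cases the required disjunction holds at $t$ of $\delta$, and since $\delta$ and $t$ were arbitrary, $M_{ce} \models \bigvee_{a \in \mathcal{A}_{ce}} legal_i(a)$.

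An alternative, purely syntactic route gives the same conclusion from the fact that $M_{ce}$ is a model of $\Sigma_{ce}$ (Proposition~\ref{prop:modelce}): Rule~\ref{ce:ter1} forces every state to satisfy $initial$ or $terminal$, and then Rule~\ref{ce:legal2} supplies a legal bid in the former case while Rule~\ref{ce:legal1} supplies $legal_i(noop)$ in the latter. I would present the direct model-theoretic argument as the main proof, since it is self-contained and does not invoke Proposition~\ref{prop:modelce}.

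There is essentially no obstacle: the only point worth stating explicitly is that $noop \in \mathcal{A}_{ce}$, which is in any case implicit, since $L_{ce} \subseteq W_{ce} \times N_{ce} \times \mathcal{A}_{ce}$ must be well-typed and Rule~\ref{ce:legal1} would otherwise be meaningless.
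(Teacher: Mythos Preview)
Your proposal is correct. The paper's own proof is a one-liner, ``Straightforward from Rules~\ref{ce:legal1} and~\ref{ce:legal2} from $\Sigma_{ce}$,'' which is precisely your alternative syntactic route (implicitly relying on Proposition~\ref{prop:modelce} and the dichotomy from Rule~\ref{ce:ter1}). Your preferred model-theoretic argument reaches the same conclusion by reading off the two clauses of $L_{ce}$ directly, which has the minor advantage of not invoking Proposition~\ref{prop:modelce}; conversely, the paper's route is shorter to state once that proposition is in hand. Either is fine, and your observation that $noop \in \mathcal{A}_{ce}$ must hold for Rule~\ref{ce:legal1} and the typing of $L_{ce}$ to make sense is the only point that deserves explicit mention.
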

\begin{proof}
Straightforward from Rules \ref{ce:legal1} and \ref{ce:legal2} from $\Sigma_{ce}$.
\end{proof}

The next lemma shows that if an agent bids $noop$ in an initial state, her payment will be zero. Furthermore, if the payment is zero in a terminal state, it will be  zero in the succeeding state.

\begin{lemma}
\label{lemma:paymentnoop}
For each agent $i \in N_{ce}$, 
\begin{enumerate}
    \item $M_{ce} \models initial \land does_i(noop) \rightarrow \bigcirc payment_i = 0$
    \item $M_{ce} \models terminal \land payment_i =0 \rightarrow \bigcirc payment_i=0$ 
\end{enumerate}
\end{lemma}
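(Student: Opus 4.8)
The plan is to verify each of the two statements separately by unfolding the CEDL semantics (Definition~\ref{def:semantics}) along an arbitrary path $\delta$ in $M_{ce}$ and an arbitrary stage $t$, and then reading off the relevant component of the state in $W_{ce}$ through the valuation function $\pi_{Y,ce}$.

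For the first statement, I would assume $M_{ce}, \delta, t \models initial \land does_i(noop)$. By the semantics of $initial$ this forces $\delta[t] = \bar{w}_{ce}$, and by the semantics of $does_i$ it forces $\theta_i(\delta, t) = noop$. Now the path definition gives $\delta[t+1] = U_{ce}(\bar{w}_{ce}, d)$ for the joint action $d$ with $d(i) = noop$. Since $\bar{w}_{ce}$ is the initial state, the first case of $U_{ce}$ applies and sets $p_i' = v_i(d(i), \lambda_{i,1}', \dots, \lambda_{i,\mathsf{m}}') = v_i(noop, \lambda_i')$. By Lemma~\ref{lemma:noop} (more precisely, by $v_i(noop, \bar{z}) = 0$ for every $\bar{z} \in I^m$), this value is $0$, so $\pi_{Y,ce}(\delta[t+1], payment_i) = 0$, hence $M_{ce}, \delta, t+1 \models payment_i = 0$ and therefore $M_{ce}, \delta, t \models \bigcirc payment_i = 0$. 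The case $t$ such that $\delta[t] \neq \bar{w}_{ce}$ is vacuous since then $M_{ce}, \delta, t \not\models initial$.

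For the second statement, I would assume $M_{ce}, \delta, t \models terminal \land payment_i = 0$, so $\delta[t] \in T_{ce}$ and $\pi_{Y,ce}(\delta[t], payment_i) = 0$. Since $\delta[t] \in T_{ce}$, the path definition yields $\delta[t+1] = U_{ce}(\delta[t], d)$ for some joint action $d$, and the second (``Otherwise'') case of $U_{ce}$ gives $U_{ce}(\delta[t], d) = \delta[t]$, so $\delta[t+1] = \delta[t]$. Hence $\pi_{Y,ce}(\delta[t+1], payment_i) = \pi_{Y,ce}(\delta[t], payment_i) = 0$, which gives $M_{ce}, \delta, t+1 \models payment_i = 0$ and thus $M_{ce}, \delta, t \models \bigcirc payment_i = 0$. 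Since $\delta$ and $t$ were arbitrary, both validities follow.

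There is no real obstacle here; this is a routine unfolding of the model definition. The only point requiring a little care is invoking the correct branch of the update function $U_{ce}$ in each case — the first branch in statement~(1) because the antecedent pins the current state to $\bar{w}_{ce}$, and the ``Otherwise'' branch in statement~(2) because $terminal$ pins the current state to $T_{ce} = W_{ce} \setminus \{\bar{w}_{ce}\}$ — and, for statement~(1), correctly matching the $p_i'$ expression in $U_{ce}$ to the $value_i(noop, trade_i)$ term handled by Lemma~\ref{lemma:noop}.
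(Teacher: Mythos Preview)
Your proof is correct. The approach, however, differs from the paper's. The paper dispatches the lemma in one line by invoking Lemma~\ref{lemma:noop} together with Rules~\ref{ce:payment} and~\ref{ce:loop} of $\Sigma_{ce}$: since Proposition~\ref{prop:modelce} has already established $M_{ce}\models\Sigma_{ce}$, Rule~\ref{ce:payment} instantiated with $\beta=noop$ plus Lemma~\ref{lemma:noop} gives statement~(1), and Rule~\ref{ce:loop} instantiated with $y=payment_i$, $x=0$ gives statement~(2) directly. You instead work semantically, unfolding the definition of $U_{ce}$ and $\pi_{Y,ce}$ on an arbitrary path and reading off the payment component in each branch. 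What the paper's route buys is brevity and reuse of the soundness result already proved; what your route buys is self-containment (no dependence on Proposition~\ref{prop:modelce}) and an explicit check that the relevant branch of $U_{ce}$ fires in each case. Both are entirely valid for a result of this kind.
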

\begin{proof}
Straightforward from Corollary \ref{lemma:noop} and Rules \ref{ce:payment} and \ref{ce:loop} from $\Sigma_{ce}$.
\end{proof}

We then focus on budget balance, non-deficit and individual rationality conditions.

\begin{thm}
  $M_{ce} \not \models BB$ and
  $M_{ce}  \models noDeficit$. 
\end{thm}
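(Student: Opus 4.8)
The plan is to reduce both claims to a single structural identity: once the unique bidding round is over, the sum of the agents' payments equals the optimal objective value of the winner-determination program on the bids that were actually submitted. Writing $T=(\beta_1,\dots,\beta_{\mathsf n})$ for that joint bid and $X=(\mathsf x_{i,j})_{i\in N_{ce},\,j\in G_{ce}}$ for the initial holdings, I want to establish that $f(add(payment_1,\dots,payment_{\mathsf n}),w)$ equals the optimal value of $WD(T,X)$ at every state $w$ reachable after bidding, and is $0$ at $\bar w_{ce}$.

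To prove this I would argue as follows. By Rule~\ref{ce:initial} all payments are $0$ in $\bar w_{ce}$, so stage~$0$ is immediate. For stages $t\ge 1$, the path definition and the ``Otherwise'' branch of $U_{ce}$ force $\delta[t]=\delta[1]=U_{ce}(\bar w_{ce},T)$, a terminal state at which Rule~\ref{ce:loop} freezes every numerical variable, so it suffices to evaluate payments at $\delta[1]$; there Rule~\ref{ce:alloc} gives $f(trade_{i,j},\delta[1])=\lambda_{i,j}(T,X)$ and Rule~\ref{ce:payment} then gives $f(payment_i,\delta[1])=v_i(\beta_i,\lambda_i(T,X))$. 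Now, for a \emph{fixed} feasible joint trade $\Lambda$, the objective of $WD(T,X)$ and its remaining constraints~\ref{const:C3}--\ref{const:C4} split into independent per-agent subproblems whose values are exactly $v_i(\beta_i,\lambda_i)$; maximizing over $\Lambda$ under~\ref{const:C1}--\ref{const:C2} and using that $\Lambda(T,X)$ is an optimizer yields $\sum_{i\in N_{ce}} v_i(\beta_i,\lambda_i(T,X)) = WD(T,X)\text{-value}$. For $M_{ce}\models noDeficit$ it then suffices to observe that the empty solution $\Lambda=\mathbf 0$, $sat_i\equiv 0$, is feasible for $WD(T,X)$ with objective $0$ (constraint~\ref{const:C1} uses $\mathsf x_{i,j}\ge 0$; Rules~\ref{rule:R1},~\ref{rule:R2} hold vacuously when nothing is satisfied and nothing is traded); hence the $WD$ optimum is $\ge 0$, so $f(add(payment_1,\dots,payment_{\mathsf n}),\delta[t])\ge 0$ at every stage of every path, i.e.\ $M_{ce}\models noDeficit$.

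For $M_{ce}\not\models BB$ it is enough to produce one instance and one stage of one path on which the sum of payments is nonzero. I would take $\mathsf n=2$, $\mathsf m=1$, $\mathsf x_{1,1}=1$, $\mathsf x_{2,1}=0$, $I_{ce}$ large enough, and let $\mathcal A_{ce}$ contain $noop$ together with (representatives in $\mathcal L_{TBBLf}$ of) the leaf bids $\beta_1=\langle-1,1,-2\rangle$ and $\beta_2=\langle 1,1,3\rangle$, and consider the path with $\theta_1(\delta,0)=\beta_1$, $\theta_2(\delta,0)=\beta_2$ (legal by Rule~\ref{ce:legal2}). The only feasible joint trades are $\mathbf 0$ and $\Lambda=((-1),(1))$; on $\Lambda$, $sat_1(\beta_1)=1$ is forced by~\ref{rule:R2} while $sat_2(\beta_2)=1$ is optimal, giving value $-2+3=1>0$, so $\Lambda$ is the unique winner. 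By the identity, $f(add(payment_1,payment_2),\delta[1]) = -2+3 = 1\neq 0$, hence $M_{ce},\delta,1\not\models BB$ and so $M_{ce}\not\models BB$. (Example~\ref{example:ce} serves the same purpose.)

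The step I expect to be the main obstacle is the identity itself, specifically reconciling two a priori different objects: $payment_i$ is defined through the term $value_i(\beta_i,trade_i)$, i.e.\ the per-agent \emph{maximal} reported value $v_i(\beta_i,\cdot)$ over valid solutions, whereas a solution of $WD(T,X)$ only returns some joint $(\Lambda,sat)$ in which $sat_i$ need not be per-agent optimal. One has to argue that at the winner-determining trade the two coincide, which is precisely where the separability of the objective and of constraints~\ref{const:C3}--\ref{const:C4} over agents (for fixed $\Lambda$) is used, together with optimality of $\Lambda(T,X)$. The remaining points---that an individual $payment_i$ may well be negative while their sum cannot be, the feasibility of the empty solution, and the arithmetic of the counterexample---are routine.
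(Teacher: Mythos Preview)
Your proposal is correct and follows essentially the same route as the paper: reduce the post-bid sum of payments to $\sum_i v_i(T_i,\lambda_i(T,X))$ via Rules~\ref{ce:alloc}--\ref{ce:payment}, identify this with the $WD$ optimum, and compare with the all-zero feasible solution to get $noDeficit$. Two points where you are in fact more careful than the paper: (i) you make the separability argument explicit (the paper simply asserts ``that is, it maximizes $\sum_i v_i(T_i,\lambda_i)$'' without justifying why the $WD$-optimal $sat$ realises the per-agent maxima), and (ii) for $\not\models BB$ you exhibit a concrete instance with strictly positive total payment, whereas the paper's proof only argues conditionally (``if $\sum_i v_i(T_i,\lambda_i)>0$ then \dots'') and relies implicitly on Example~\ref{example:ce} for existence.
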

\begin{proof}
Let $\delta$ be a path in $M_{ce}$ and $t$ be a stage in $\delta$.

If $\delta[t] = \bar{w}_{ce}$, then $M_{ce}, \delta, t \models \bigwedge_{i \in N_{ce}} payment_i = 0$. Thus, $M_{ce}, \delta, t \models BB \land noDeficit$.

Otherwise, by the path definition, $\delta[t] = U_{ce}(\delta[t-1], \theta(\delta, t-1))$. Since $M_{ce} \models initial \rightarrow \bigcirc terminal$ and given that $\delta[t-1] = \delta[t]$ whenever $\delta[t-1]\in T_{ce}$, we focus on the case where $\delta[t-1] = \bar{w}_{ce}$ and the remaining cases follow by consequence.

Let us denote $T_i = \theta_i(\delta, t-1)$, for each $i \in N_{ce}$. By Rules \ref{ce:alloc} and \ref{ce:payment} from $\Sigma_{ce}$, we have $M_{ce}, \delta, t \models \bigwedge_{i \in N_{ce}} (trade_i = win_i(T_i, \dots, T_\mathsf{n}$, $\mathsf{x}_{1,1}, \dots$, $\mathsf{x}_{\mathsf{n,m}}) \land payment_i = value(T_i, trade_i))$. 
The  solution $WD(T_i$, $\dots, T_\mathsf{n}, \mathsf{x}_{1,1}$, $\dots, \mathsf{x}_{\mathsf{n,m}}) = (\Lambda, sat)$ satisfies Constraints \ref{const:C1}-\ref{const:C4} and 
maximizes $\sum_{i \in N_{ce}} \sum_{\beta \in Node(T_i)}  v_i(\beta) sat_i(\beta)$, that is, it maximizes $\sum_{i \in N_{ce}}  v_i(T_i, \lambda_i)$. 
We define the pair $(\Lambda', sat')$, such that $\Lambda' = (0, \dots, 0)$ is an empty joint trade and 
$sat' = (sat_1, \dots, sat_\mathsf{n})$, where $sat_i'=\{\}$,  for all $i \in N_{ce}$. Notice $sat_i'$ is valid for $T_i$ and $\lambda_i'$ (i.e, $sat_i' \in valid(T_i, \lambda_i')$) and $\sum_{\beta \in Node(T_i)}v_i(\beta)sat_i'(\beta) = 0$.
Remind that the value of a trade $\lambda_i'$ given a bid-tree $T_i$ maximizes the value of the satisfied nodes in a valid solution. 
Thus, $v_i(T_i, \lambda_i') \geq 0$ (i.e., it is at least equal to  $\sum_{\beta \in Node(T_i)}v_i(\beta)sat_i'(\beta)$).

Since the pair $(\Lambda', sat')$ satisfies the Constraints \ref{const:C1}-\ref{const:C4}, 
a solution $(\Lambda, sat)$ for WD should have at least the same cumulative trade value, that is, $\sum_{i \in N_{ce} v_i(T_i, \lambda_i)} \geq \sum_{i \in N_{ce} v_i(T_i, \lambda_i')}$. Thus,  $\sum_{i \in N_{ce}} v_i(T_i$, $\lambda_i) \geq 0$. Since $M_{ce}, \delta, t \models \bigwedge_{i \in N_{ce}} payment_i = value(T_i, trade_i)$, we have $M_{ce}, \delta, t \models add(payment_1, \dots, payment_\mathsf{n}) \geq 0$. Thereby, 
$M_{ce}, \delta, t \models noDeficit$.

However, if $\sum_{i \in N_{ce}} v_i(T_i, \lambda_i) > 0$, then $M_{ce}, \delta, t \models add(payment_1$, $\dots, payment_\mathsf{n}) > 0$ and $M_{ce}, \delta, t \not \models BB$.
\end{proof}

\begin{thm}
\label{thm:irce}
For each $i \in N_{ce}$ and some monotonic valuation $\vartheta_i$ over individual trades, 
 $M_{ce} \models IR_i$. 
\end{thm}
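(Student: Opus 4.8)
The plan is to unfold the definition of $IR_i$: for every path $\delta$ of $M_{ce}$ and every stage $t$ of $\delta$ I must exhibit a path $\delta'$ that agrees with $\delta$ on the prefix $\delta[0,t]$ and on the stage-$t$ actions of all agents other than $i$, and along which the (quasi-linear) utility $\vartheta_i(trade_i)-payment_i$ of agent $i$ at stage $t+1$ is at least her utility at stage $t$. The canonical witness for $\delta'$ is the path in which $i$ deviates at stage $t$ to $noop$ (``not participating'') while every other agent keeps the action it played in $\delta$; since $U_{ce}$ is deterministic and $M_{ce}$ loops after a single step, this determines $\delta'$ uniquely.

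I would then split on $t$. For $t\ge 1$ the state $\delta[t]$ is already terminal (by the construction of $T_{ce}$; see also Proposition~\ref{prop:cefini}), so I can simply take $\delta'=\delta$: the path loops at that terminal state, so $\delta'[t+1]=\delta'[t]$, every numerical variable --- hence $i$'s utility --- is unchanged between stages $t$ and $t+1$, and the required implication holds at once. The substantive case is $t=0$, i.e., $\delta[0]=\bar{w}_{ce}$. Here I take $\delta'$ to be the path in which $i$ plays $noop$ at stage $0$ and each $r\neq i$ plays $\theta_r(\delta,0)$; this is a legal joint action, since $noop\in\mathcal{A}_{ce}$ and every bid-tree is legal at $\bar{w}_{ce}$ (Rule~\ref{ce:legal2}), and the prefix and other-agents conditions hold by construction.

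It remains to compare agent $i$'s utility at stages $0$ and $1$ of $\delta'$. At stage $0$ one has $trade_{i,j}=0$ for every $j$ and $payment_i=0$, so $i$'s utility equals $\vartheta_i(0)$. At stage $1$, by Lemma~\ref{lemma:noop} the payment of $i$ is $v_i(noop,trade_i)=0$, so her utility equals $\vartheta_i(trade_i)$. The crucial claim --- and the step I expect to be the main obstacle --- is that bidding $noop$ cannot turn $i$ into a net seller of any good, i.e.\ that $trade_i\ge 0$ componentwise in the winner determination. This should follow from Constraint~\ref{const:C3}: any feasible solution of $WD$ requires $sat_i\in valid(noop,\lambda_i)$, and since $noop=\langle 0,g,0\rangle$ is a single leaf with $q_i(noop,j)=0$ for all $j$, Rule~\ref{rule:R1} ranges over the empty set while Rule~\ref{rule:R2} collapses to $0\le\lambda_{i,j}$ for every $j$; hence $valid(noop,\lambda_i)\neq\emptyset$ forces $\lambda_i\ge 0$. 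Granting this, monotonicity of $\vartheta_i$ yields $\vartheta_i(trade_i)\ge\vartheta_i(0)$, so $i$'s utility does not decrease from stage $0$ to stage $1$ of $\delta'$ --- precisely $M_{ce},\delta,0\models IR_i$. Everything besides this non-negativity claim is routine unfolding of the CEDL semantics and of the definition of $M_{ce}$.
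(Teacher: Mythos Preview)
Your proposal is correct and follows essentially the same route as the paper: case-split on whether $\delta[t]$ is initial or terminal, handle $t\ge 1$ via the terminal loop (the paper likewise just observes $\delta[t+1]=\delta[t]$), and for $t=0$ witness $\delta'$ by having $i$ play $noop$, derive $payment_i=0$ (the paper cites Lemma~\ref{lemma:paymentnoop}, you invoke Lemma~\ref{lemma:noop} plus Rule~\ref{ce:payment}, which amounts to the same thing), obtain $\lambda_i\ge 0$ from Constraint~\ref{const:C3} and Rule~\ref{rule:R2}, and conclude by monotonicity of $\vartheta_i$. Your unpacking of the $valid(noop,\lambda_i)$ argument is in fact more explicit than the paper's.
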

\begin{proof}
Given a path $\delta$ in $M_{ce}$, a stage $t$ in $\delta$, an agent $i\in N_{sa}$, and a monotonic valuation $\vartheta_i(\lambda_i) \in I_{ce}$ for each individual trade $\lambda_i \in I^\mathsf{m}_{ce}$, let us consider the case where $\delta[t] = \bar{w}_{sa}$. We have $M_{sa}, \delta, t \models payment_i = 0 \land \bigwedge trade_{i,j} = 0$. Then the utility of $i$ is simply  $\vartheta_i(0, \dots, 0)$. Let $\delta'$ be a path such that $\delta[0,t] = \delta'[0,t]$, $\theta_i(\delta, t) = noop$, and  $\theta_r(\delta, t) = \theta_r(\delta', t)$, for all $r \in N_{ce}\setminus \{i\}$. Since $noop \in L_{ce}(\delta[j],i)$, there is  such path in $M_{ce}$. 
 From Lem\-ma~\ref{lemma:paymentnoop}, we have $M_{ce}, \delta', t+1 \models payment_i = 0$.  Constraint \ref{const:C3} (from the winner determination) and Rule \ref{rule:R2} (from the valid solution definition) ensure that the quantity of each item across the leaves of $noop$ is no greater than the total number of units traded. For each $j\in G_{sa}$, since $q_i(noop, j)=0$ , then $M_{ce}, \delta', t+1 \models trade_{i,j} \geq 0$. Since $i$'s valuation is monotonic, it follows that $M_{ce}, \delta', t+1 \models utility_i(trade_i, payment_i)\geq \vartheta_i(0, \dots, 0)$.
 
Now, let us assume $ \delta[t] \neq \bar{w}_{sa}$. Let $x \in I_{sa}$ such that  $M_{ce}, \delta, t \models utility_i(trade_i, payment_i) = x$. From the path definition,  $\delta[t] = \delta[t+1]$ and thus $M_{ce}, \delta, t+1 \models utility_i(trade_i, payment_i) = x$. 
\end{proof}

\subsubsection{Vickrey–Clarke–Groves payment}
A Vickrey-Clarke-Groves (VCG) mechanism computes a discount for each winner's  payment, such as she has an incentive to be truthful: the bidder is willing to reveal her private value. Remind that this work focuses on the auction definition and not on the bidder's behavior, the reader may refer to Krishna (\citeyear{Krishna}) for this strategic aspect.

Let us  show how to express VCG payments for an agent $i \in N_{ce}$. 
Given the bid-trees $T = (T_1, \dots, T_\mathsf{x})$ and the initial joint allocation $\mathsf{X} \in I^{\mathsf{nm}}_{ce}$, 
let $T' = (T_1', \dots, T_n')$ be defined
as follows: $T_i'=noop$ and $T_r' = T_r$,  for all $r \in N_{ce}\setminus \{i\}$.
Similarly, let $\mathsf{X}' = (x_{1,1}', \dots, \mathsf{x}_{\mathsf{n,m}}')$ be defined as follows: $\mathsf{x}_{i,j}' = 0$ and $\mathsf{x}_{r,j}' = x_{r,j}'$, for all $r \in N_{ce}\setminus \{i\}$ and all $j\in G_{ce}$. 
Remind $\lambda_{i}(T, \mathsf{X})$ denotes the individual trade of agent $i$ in the solution for $WD(T, \mathsf{X})$.

The VCG payment for agent $i$ is the value of the bid-tree $T_i$ given the individual trade $\lambda_{i}(T, \mathsf{X})$ discounted by the difference between the cumulative values from the joint trade $\Lambda(T, \mathsf{X})$ and the trade resulting from removing the bid and allocation of $i$. Formally,
\[p_{vcg,i} = v_i(T_i, \lambda_{i}(T, \mathsf{X})) - \sum_{r \in N_{ce}}( v_r(T_r, \lambda_{r}(T, \mathsf{X})) - v_r(T_r', \lambda_{r}(T', \mathsf{X}')))\]

To construct a combinatorial exchange with VCG payments, we can define $\Sigma_{vcg}$ such that it is defined  exactly as $\Sigma_{ce}$, except by Rule \ref{ce:payment}, which is replaced by the following:
\begin{multline*}
does(T) \land \neg terminal
\land 
p_i = sub(
        value_i(T_i, win_i(T, \mathsf{X})), 
        \\
        add(
            sub(
                value_1(T_1, win_1(T, \mathsf{X})),
                value_1(T_1', win_1(T', \mathsf{X}'))      
            ), \dots, 
            \\
            sub(
                value_\mathsf{n}(T_\mathsf{n}, win_\mathsf{n}(T, \mathsf{X})),
                value_\mathsf{n}(T_\mathsf{n}', win_\mathsf{n}(T', \mathsf{X}'))      
            )
            )    
        )
\\
\rightarrow\bigcirc  payment_i = p_i      
\end{multline*}
for each $p_i \in I_{ce}$, $T \in \mathcal{A}^\mathsf{n}_{ce}$ and  $i\in N_{ce}$. 

Let $M_{vcg}$ be an ST-model defined as $M_{ce}$, except by the definition of 
$\pi_{Y, vcg}(w, payment_i)$, for all $i$, which are defined by according to the VCG price $p_{vcg,i}$ in a state $w \in W_{vcg}$. Unsurprisingly,  $BB$ and $noDeficit$ are not valid in $M_{vcg}$.

\begin{proposition}
\label{prop:vcgnotbb}
$M_{vcg} \not \models BB$ and $M_{vcg} \not \models noDeficit$. 
\end{proposition}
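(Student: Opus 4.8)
The plan is to exhibit, for each of the two claims, a single path and stage at which the relevant formula fails. Since $M_{vcg}$ agrees with $M_{ce}$ on everything except the payment valuation $\pi_{Y,vcg}(w,payment_i)$, the structural facts proven earlier for $M_{ce}$ (in particular that every state succeeding $\bar{w}_{vcg}$ is terminal, and that at the initial state all payments are $0$) carry over verbatim. So the argument reduces to a calculation of VCG prices on a concrete instance.

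First I would fix a small instance: take $\mathsf{n}=1$, or even simpler, reuse the two-agent, two-good setting of Example~\ref{example:ce} with the bid-trees of Figure~\ref{fig:tbbl}. I would pick a path $\delta$ with $\delta[0]=\bar{w}_{vcg}$ and with the agents performing those bids, so that $\delta[1]$ is the (terminal) state in which $\pi_{Y,vcg}(\delta[1],payment_i)=p_{vcg,i}$ for each $i$. Then I compute $\sum_i p_{vcg,i}$ from the displayed VCG formula. Expanding the sum, $\sum_{i\in N_{ce}} p_{vcg,i} = \sum_i v_i(T_i,\lambda_i(T,\mathsf{X})) - \sum_i \sum_{r} \big( v_r(T_r,\lambda_r(T,\mathsf{X})) - v_r(T_r',\lambda_r(T',\mathsf{X}'))\big)$; the first term is the optimal WD value $\mathrm{OPT}$, and the double sum telescopes to $\mathsf{n}\cdot\mathrm{OPT} - \sum_i \mathrm{OPT}^{-i}$, where $\mathrm{OPT}^{-i}$ is the optimal value of the economy with $i$ removed. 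So $\sum_i p_{vcg,i} = \mathrm{OPT} - \mathsf{n}\,\mathrm{OPT} + \sum_i \mathrm{OPT}^{-i} = \sum_i \mathrm{OPT}^{-i} - (\mathsf{n}-1)\mathrm{OPT}$. For any instance with positive gains from trade and $\mathsf{n}=2$ this is $\mathrm{OPT}^{-1}+\mathrm{OPT}^{-2} - \mathrm{OPT}$; choosing the instance so that removing either agent kills all trade (e.g.\ a pure buyer and a pure seller whose only profitable deal involves both) gives $\mathrm{OPT}^{-i}=0$, hence $\sum_i p_{vcg,i} = -\mathrm{OPT} < 0$. This single stage witnesses $M_{vcg},\delta,1 \not\models noDeficit$, which also gives $M_{vcg}\not\models BB$ since $noDeficit$ is the weaker condition; I would still note, to be thorough, that $BB$ additionally fails already because at the same stage $add(payment_1,\dots,payment_\mathsf{n}) = -\mathrm{OPT} \neq 0$.

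Concretely, with the Example~\ref{example:ce}/Figure~\ref{fig:tbbl} data one reads off $\mathrm{OPT}$ from the winning joint trade $(2,-1,-2,1)$: agent $r1$ gets value $4$ (buying $2$ units of $\mathsf{a}$ at $2$ each via the $IC_1^2$ leaf) and $r2$ gets value $4+1=5$ on the right branch of its XOR (sell $2$ of $\mathsf{a}$ for $4$, buy $1$ of $\mathsf{b}$ for $2$, plus the node bonus $1$) — so $\mathrm{OPT}=9$; and with either agent removed no feasible nonempty trade remains, so $\mathrm{OPT}^{-r1}=\mathrm{OPT}^{-r2}=0$ and $\sum_i p_{vcg,i} = 0+0-9 = -9 < 0$. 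Hence $M_{vcg},\delta,1\models add(payment_{r1},payment_{r2}) < 0$, so $M_{vcg},\delta,1\not\models noDeficit$ and a fortiori $M_{vcg},\delta,1\not\models BB$; therefore $M_{vcg}\not\models noDeficit$ and $M_{vcg}\not\models BB$.

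The only real work is the arithmetic on the chosen instance and the verification that removing an agent from the exchange economy indeed leaves no profitable trade — this is where one must be careful, since a badly chosen instance could have $\sum_i\mathrm{OPT}^{-i}\ge(\mathsf{n}-1)\mathrm{OPT}$ and fail to witness a deficit. I expect this instance-selection step to be the main (though minor) obstacle; everything else is bookkeeping, using that the VCG price values $v_r(T_r,\cdot)$ and $v_r(T_r',\cdot)$ are exactly the quantities computed by Function~$f$ on the $value_i(\cdot,\cdot)$ and $win_i(\cdot)$ terms in the replacement of Rule~\ref{ce:payment}.
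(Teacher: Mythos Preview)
Your general identity $\sum_{i} p_{vcg,i} = \sum_i \mathrm{OPT}^{-i} - (\mathsf{n}-1)\,\mathrm{OPT}$ is correct and is a cleaner lens than the paper's argument; the paper simply fixes a minimal two-agent, one-good instance ($T_1=\langle 1,1,2\rangle$, $T_2=\langle -1,1,-1\rangle$, $\mathsf{x}_{1,1}=0$, $\mathsf{x}_{2,1}=1$) and plugs directly into the VCG rule to obtain $payment_1=1$, $payment_2=-2$ and hence a deficit of $-1$.

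The gap is in your arithmetic on the Example~\ref{example:ce} data. In TBBL, $v_i(\beta)$ is a \emph{signed} payment: a seller leaf such as $\langle -2,\mathsf{a},-4\rangle$ contributes $-4$, not $+4$. In addition, Rule~\ref{rule:R2} forces satisfaction of a seller leaf whenever the corresponding good is net sold: with $\lambda_{r1,\mathsf{b}}=-1$ the leaf $\langle -1,\mathsf{b},-3\rangle$ must lie in $sat_{r1}$, so $r1$'s value cannot be $4$. With the correct signs one gets $v_{r1}(T_{r1},(2,-1))=2+2-3=1$ and $v_{r2}(T_{r2},(-2,1))=-4+2+1=-1$, hence $\mathrm{OPT}=0$, not $9$; your own formula then returns $\sum_i p_{vcg,i}=0+0-(2-1)\cdot 0=0$, and this instance witnesses neither $\neg BB$ nor $\neg noDeficit$. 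The structure of your argument is fine --- you just need an instance with strictly positive surplus. The paper's one-good bilateral trade has $\mathrm{OPT}=1$ and $\mathrm{OPT}^{-1}=\mathrm{OPT}^{-2}=0$, and plugging it into your formula immediately gives the required $-1$.
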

\begin{proof} 
Let us prove it by showing a  counterexample. 
Assume $\mathsf{n} = 2$, $\mathsf{m} = 1$, $\mathsf{x}_{1,1} =0$ and $\mathsf{x}_{2,1} =1$ i.e., there are two agents, one good type and the second agent initially holds one copy of the good. Let $T_1 =  \langle 1,1,2\rangle$ and $T_2 = \langle -1, 1, -1\rangle$
Given a path $\delta$ in $M_{vcg}$, let  $t$ in $\delta$ be a stage such that $M_{vcg}, \delta, t \models initial \land does_1(T_1) \land does_2(T_2)$. That is, in the initial state, agent $1$ bids for buying the good at the price $2$ and agent $2$ bids for selling the good at the price $-1$. 
By WD, we have $\Lambda(T_1, T_2, 0,1) = (1,-1)$. Thus, $v_1(T_1, 1) = 2$ and $v_2(T_2, -1) = -1$. By the update function, we have $M_{vcg}, \delta, t+1 \models trade_{1,1} = 1 \land trade_{2,1} = -1$. 
Note no trade is performed when any of the agents does not participate, i.e., $\Lambda(noop, T_2, 0,1) = \Lambda(T_1, noop, 0,0) = (0,0)$. We also have $v_i(T_i, 0) = v_i(noop, 0) =0$, for each $i \in N_{vcg}$. By the VCG payment rule, $M_{vcg}, \delta, t+1 \models payment_1 = sub(2, add(add(2, -1),add(0,0))) \land payment_2 = sub(-1, add(add(2, -1),add(0,0)))$. That is,  $M_{vcg}, \delta, t+1 \models payment_1 = 1\land payment_2 = -2$. Thus, we have a budget deficit of $-1$, i.e., $M_{vcg}, \delta, t+1 \models add(payment_1, payment_2) = -1$. Thereby, $M_{vcg} \not \models BB$ and $M_{vcg} \not \models noDeficit$.
\end{proof}

\subsection{Representing a Simultaneous Ascending Auction}

Let us now consider a new type of auction: the Simultaneous Ascending Auction (SAA) is a single-side and single-unit auction similar to the traditional English Auction, except that several goods are sold at the same time, and that the bidders simultaneously bid for any number of goods they want 
\cite{Cramton06}. 
To represent a SAA 
with $\mathsf{m}$ good types and $\mathsf{n}$ agents, we first describe the auction signature, written $\mathcal{S}_{sa} = (N_{sa},  G_{sa}, \mathcal{A}_{sa}, \Phi_{sa}, Y_{sa}, I_{sa})$, where $N_{ce} = \{1, \dots, \mathsf{n}\}$, $G_{sa} = \{1, \dots, \mathsf{m}\}$, $\mathcal{A}_{sa} \subseteq \mathcal{L}_{TBBLf}$, $\Phi_{sa} = \{sold_j,  bid_{i,j} :  $ $ j\in G_{sa} $ \& $ i\in N_{sa}\}$ and $Y_{sa} = \{price, price_j, $ $ trade_{i,j}, alloc_{i,j}, payment_{i}: j\in G_{sa} $ \& $ i\in N_{sa}  \}$. The propositions $sold_j$ and $bid_{i,j}$ represent whether the good $j$ was sold and whether $i$ is bidding for $j$, resp. The variables $price$ and $price_j$ specify the current price for any unsold good and the selling price for $j$, resp. 

Each instance of a SAA 
is specific and defined with respect to $\mathcal{A}_{sa}$, $I_{sa}$ and the constant values $\mathsf{inc}, \mathsf{n}, \mathsf{m} \in I_{sa, +}\setminus \{0\}$ and $ \mathsf{start} \in I_{sa, +}$, representing the quantity of agents and goods, the increment, and the starting price, respectively. Let $\mathsf{max}_{sa}$ denote the largest value in $I_{sa}$.
Then, the rules of an SAA 
are formulated by CEDL-formulas as shown in Figure \ref{fig:SA}.

\begin{figure}[ht]
\centering
\begin{mdframed}
Given $p_1, \dots, p_\mathsf{m} \in  I_{sa, +}$, let $or_{p_1, \dots, p_\mathsf{m}} =_{def} OR((\langle 1, 1, p_1 \rangle$,  $\dots, \langle 1, 1, p_\mathsf{m} \rangle), 0)$,
  
\begin{enumerate}

\item \label{sa:initial} $initial \leftrightarrow price = \mathsf{start} $ $ \land  \bigwedge_{j\in G_{sa}} \big( price_j = \mathsf{start}$ $\land$ 
  $ \bigwedge_{i \in N_{sa}}
  (\neg bid_{i,j} \land trade_{i,j} = 0)\big) $
    
\item \label{sa:sold}
$sold_j \leftrightarrow \bigvee_{i \in N_{sa}} trade_{i,j} = 1$, for each $j \in G_{sa}$ 

\item \label{sa:terminal} $terminal \leftrightarrow $
$\neg initial \land \bigwedge_{j\in G_{sa}}  (sold_j $ $ \lor $ $ \bigwedge_{i\in N_{sa}}\neg bid_{i,j} )$ 
 
\item \label{sa:trade1}
  $\bigcirc (trade_{i,j}=1 \leftrightarrow  
  bid_{i,j} $ $\land $    $\bigwedge_{r \in N_{sa} \setminus \{i\}} \neg bid_{r,j} )$, for each $i \in N_{sa}$, $ j \in G_{sa}$
   
\item \label{sa:trade0} $\bigcirc( trade_{i,j}=0 \leftrightarrow \neg ( bid_{i,j} $ $\land $   $\bigwedge_{r \in     N_{sa} \setminus \{i\}} \neg bid_{r,j}))$, for each $i \in N_{sa}$, $j \in G_{sa}$ 
      
\item \label{sa:legal} 
$ legal_i(or_{p_1, \dots, p_\mathsf{m}}) \leftrightarrow$ 
  $\bigwedge_{j\in G_{sa}}\big( (p_j = 0 $ $ \land $ $ trade_{i,j} = 0  )$ $\lor $ $(p_j = price \land  \neg sold_j) $  $\lor $ $(p_j = price_j  \land trade_{i,j} = 1 ) \big )$,
  for each $i \in N_{sa}$, 
  $ p_1, \dots, p_\mathsf{m} \in \{x: 0\leq x <\mathsf{max}_{sa}-\mathsf{inc}\}$
  
\item \label{sa:price}
    $\neg terminal \land price = x \rightarrow \bigcirc  price = add(x, \mathsf{inc})$,   for each $x \in I_{sa,+}$
 
\item \label{sa:pricej}
    $\neg terminal \land price_j = x\rightarrow \bigcirc ((price_j = x \land sold_j) \lor (price_j = add(x, \mathsf{inc}) \land \neg sold_j))$, for each $j\in G_{sa}$,  $x\in I_{sa, +}$ 
\item \label{sa:alloc0}
  $\neg terminal \rightarrow alloc_{i,j} = 0$, for each $i \in N_{sa}$, $j \in G_{sa}$ 
  
\item \label{sa:bid}
  $\bigcirc  bid_{i,j} \leftrightarrow $  $(  does_i(or_{p_1, \dots, p_\mathsf{m}}) $ $\land $ $p_j \neq 0 )   $   $\lor $ $(bid_{i,j} \land terminal)$, for each $i\in N_{sa}$, $j \in G_{sa}$ and some $p_1, \dots, p_m \in I_{sa, +}$ 
  
\item \label{sa:alloctrade}  
  $terminal \rightarrow alloc_{i,j} = trade_{i,j}$, for each $i \in N_{sa}$, $j \in G_{sa}$ 
  
\item \label{sa:payment}
  $payment_i = add(times(price_1, trade_{i,1})$, $\dots$,  $times(price_\mathsf{m}, trade_{i,\mathsf{m}})
  )$, for each $i \in N_{sa}$ 

  \item \label{sa:loop} $terminal \land y= x \rightarrow \bigcirc y = x$, for each $y \in Y$, $x \in I_{sa}$ 
\end{enumerate}
\end{mdframed}
\caption{Simult. Ascending Auction represented by $\Sigma_{sa}$}
\Description{}
\label{fig:SA}
\end{figure}

In the initial state, no agent is bidding, no trade is performed and the prices have the value $\mathsf{start}$ (Rule \ref{sa:initial}). A good is sold if there is a trade for some agent (Rule \ref{sa:sold}). In a terminal state, all the goods are either sold or no one is bidding for them (Rule \ref{sa:terminal}). A good will be traded to an agent if she is the only one bidding for it, otherwise there is no trade (Rules \ref{sa:trade1}-\ref{sa:trade0}). 
For each good, an agent can either bid the value 0, the current price (for unsold goods) or repeat her winning bid for this good (Rule \ref{sa:legal}). In a non-terminal state, the propositions and numerical variables are updated as follows: (i) the current price increases, (ii) 
the selling price increases for unsold goods,
(iii) there is no allocation, and (iv) the active bidders for each good are updated w.r.t.  their bids 
(Rules \ref{sa:price}-\ref{sa:bid}). 
The allocations are assigned in terminal states w.r.t. 
trades (Rule \ref{sa:alloctrade}).  The payment for an agent is the cumulative value of the selling price for her traded goods (Rule \ref{sa:payment}). 
Finally, after a terminal state, a numerical variable cannot change (Rule \ref{sa:loop}). 
Let $\Sigma_{sa}$ be the set of Rules \ref{sa:initial}-\ref{sa:loop}.

\paragraph{Representing as a model} 
Next, we address the model representation of the Simultaneous Ascending Auction (SAA). Let $\mathscr{M}_{sa}$ be the set of ST-models $M_{sa}$ defined for any $\mathcal{A}_{sa} \subseteq \mathcal{L}_{TBBLf}$, $I_{sa} \subset \mathbb{Z}$, and the constant values   $\mathsf{inc}, \mathsf{start} \in I_{sa, +}$ and $\mathsf{m}, \mathsf{n} \in I_{sa, +}\setminus \{0\}$. Let $\mathsf{max}_{sa}$ denote the largest value in $I_{sa}$.
Each $M_{sa}$ is defined as follows: 

\begin{itemize}
    \item $W_{sa} = \{\langle b_{1,1},\dots, b_{\mathsf{n,m}}, t_{1, 1}, \dots, t_{\mathsf{n,m}}, p, p_1, \dots, $ $ p_\mathsf{m} \rangle:  b_{i,j}, t_{i,j} \in \{0,1\} $ \& $ p, p_j \in I_{sa, +}$ \& $ i\in N_{sa}$ \& $ j\in G_{sa}\}$; 
    
    \item  $\bar{w}_{sa} = \langle 0,\dots, 0, 0,\dots, 0, \mathsf{start}, \mathsf{start}, \dots, \mathsf{start} \rangle$;

    \item $T_{sa}= \{w:  w = \langle   b_{1,1},\dots, b_{\mathsf{n,m}}, t_{1, 1}, \dots, t_{\mathsf{n,m}}$, $ p, p_1$, $\dots,p_\mathsf{m} \rangle \in W_{sa}\setminus \{\bar{w}_{sa}\}$ \& for all $j\in G_{sa}$, either (i) $t_{i,j} =1$ for some $i \in N_{sa}$ or (ii) $bid_{i,j} = 0$, for all $i\in N_{sa}\}$; 
    
    \item $L_{sa}=  \{(w, i, OR((\langle 1, 1, pr_1 \rangle$,  $\dots, \langle 1, 1, pr_\mathsf{m} \rangle), 0)):$ 
    $i \in N_{sa}$ \& 
    $w = \langle  b_{1,1},$ $\dots, $ $b_{\mathsf{n,m}}, t_{1, 1}, $ $\dots, $ $ t_{\mathsf{n,m}}, p, p_1,$ $\dots,$  $p_\mathsf{m} \rangle \in W_{sa}$ \&  for all $j \in G_{sa}$, and all $0\leq pr_j< \mathsf{max}_{sa}-\mathsf{inc}$ such that either (i) $pr_j = 0 $ \& $t_{i,j}=0$ or (ii) $ pr_j =p $ \& $t_{i',j} \neq 1$, for all $i' \in N_{sa}$ or (iii) $pr_j = p_j $ \& $ t_{i,j} =1\}$;
    
    \item $U_{sa}$ is defined as: for all $w = \langle  b_{1,1},\dots, b_{\mathsf{n,m}}, t_{1,1}$, $\dots$, $t_{\mathsf{n,m}}, p, p_1$, $\dots, p_\mathsf{m} \rangle \in W_{sa}$ and all $d \in \mathcal{A}^\mathsf{n}_{sa}$: 
    \begin{itemize}
        \item If $w \not \in T_{sa}$, then: 
        $U_{sa}(w, d) = \langle b_{1,1}'$, $\dots, b_{\mathsf{n,m}}', t_{1,1}',\dots$, $t_{\mathsf{n,m}}'$, $p', p_1',\dots, p_\mathsf{m}' \rangle $, where for every $i\in N_{sa}$ and $j\in G_{sa}$, 
        (i) $b_{i,j}' = 1$ iff 
        $d(i) =  OR((\langle 1, 1, pr_1 \rangle$,  $\dots, \langle 1, 1, pr_\mathsf{m} \rangle), 0)$ and $pr_j \neq 0$; and $b_{i,j}' = 0$ otherwise; 
        (ii) $t_{i,j}' = 1$ iff $b_{i,j}' = 1$ and for all $r \in N_{sa}\setminus\{i\}, b_{i,j}' \neq 1$; and $t_{i,j}' = 0$ otherwise;
        (iii) $p' = p+\mathsf{inc}$;
        (iv)  $p_j' = p_j+\mathsf{inc}$ iff $t_{r,j}' = 0$ for all $r \in N_{sa}$; and $p_j' = p_j$ otherwise.
        \item Otherwise, $U_{sa}(w,d) = w$.
    \end{itemize} 

    \item For each $w \in W_{sa}$, $ i \in N_{sa}$ and $j\in G_{sa}$, 
    (i) $\pi_{Y, sa}(w, trade_{i,j}) = t_{i,j}$;
    (ii) $\pi_{Y, sa}(w, price) = p$; 
    (iii) $\pi_{Y, sa}(w, price_{j}) = p_j$;
    (iv) $\pi_{Y, sa}(w, alloc_{i,j}) = 0$ iff $w\not \in T_{sa}$, and $\pi_{Y, sa}(w$, $alloc_{i,j}) = t_{i,j}$ otherwise; 
    (v) $\pi_{Y, sa}(w, payment_{i}) = \sum_{j \in G_{sa}} (p_j \pi_{Y, sa}(w$, $trade_{i,j}))$.
    
    \item For each $w \in W_{sa}$, $\pi_{\Phi, sa}(w) = \{sold_j : t_{i,j} =1 $ \& $ j\in G_{sa}$ \& $ i\in N_{sa}\} \cup \{ bid_{i,j} : b_{i,j} =1 $ \& $ j\in G_{sa} $ \& $ i\in N_{sa}\}$.
    
\end{itemize}

Hereafter, we assume an instance of $M_{sa} \in \mathscr{M}_{sa}$ and $\Sigma_{sa}$ for some 
$\mathcal{A}_{sa} \subseteq \mathcal{L}_{TBBLf}$, $I_{sa} \subset \mathbb{Z}$, $\mathsf{start} \in I_{sa, +}$ and $\mathsf{inc}, \mathsf{m}, \mathsf{n} \in I_{sa, +}\setminus \{0\}$. 

\begin{example}
Let $M_{sa} \in \mathscr{M}_{sa}$, where $\mathsf{start} = 2$, $\mathsf{inc} = 1$ and the agents and good sets are the same from Example \ref{example:ce}. 
Figure \ref{fig:saflow} illustrates a path in $M_{sa}$.
In state $w_0$, agents $r1$ and $r2$ bid for good $\mathsf{a}$, but only agent $r1$ bid for good $\mathsf{b}$. In state $w_1$, since $r1$ is the only bidder for $\mathsf{b}$, $\mathsf{b}$ is sold to her. 
Agent $r1$ needs to keep her bid for $\mathsf{b}$ and $r2$ can no longer bid for it. In $w_1$, agent $r2$ increases its bid for  good $\mathsf{a}$ and agent $r1$ do not bid for $\mathsf{a}$. In state $w_2$, since $r2$ is the only bidder for $\mathsf{a}$, she buys the good. Since all the goods were sold, this state is terminal.  

\begin{figure}[ht]
\centering
\includegraphics[width=1\columnwidth]{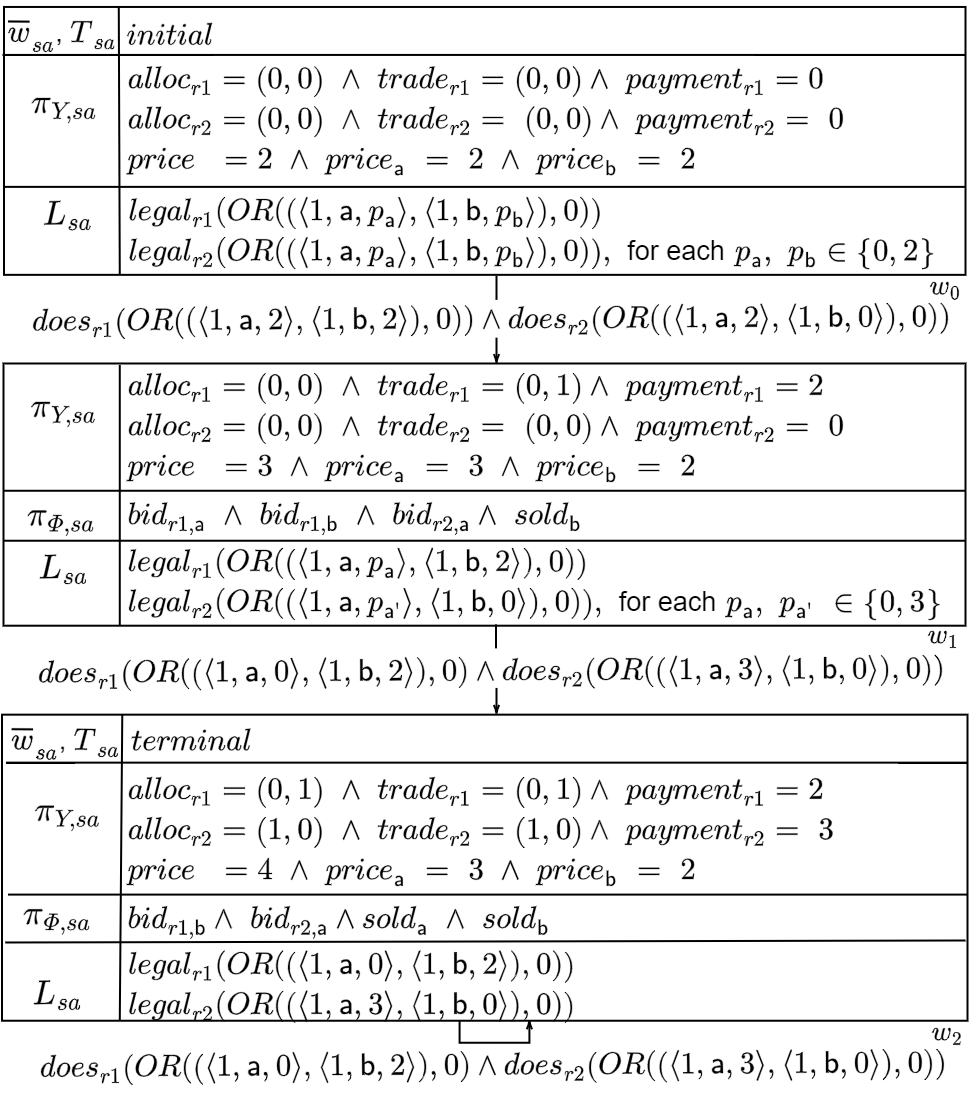}
\caption{A Path in $M_{sa}$, with 2 bidders and $2$ goods} \label{fig:saflow}
\Description{}
\end{figure}

\end{example}

Let us now evaluate the protocol. First, we show that  $\Sigma_{sa}$ is a sound representation of $M_{sa}$. 

\begin{proposition}
\label{prop:modelsa}
$M_{sa}$ is an ST-model and it is a model of  $\Sigma_{sa}$.
\end{proposition}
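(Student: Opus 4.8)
The statement to establish is Proposition~\ref{prop:modelsa}: that $M_{sa}$ is an ST-model and a model of $\Sigma_{sa}$. The plan is to mirror the structure of the proof of Proposition~\ref{prop:modelce}. First I would check that $M_{sa}$ meets Definition~\ref{def:model2}: $W_{sa}$ is nonempty and finite (bounded by $I_{sa}$), $\bar{w}_{sa}\in W_{sa}$, $T_{sa}\subseteq W_{sa}$, the legality relation $L_{sa}$ is well-defined as a subset of $W_{sa}\times N_{sa}\times\mathcal{A}_{sa}$ (checking that the $OR$-bids used actually belong to $\mathcal{L}_{TBBLf}$ and hence may be assumed in $\mathcal{A}_{sa}$), that $U_{sa}$ is a total function $W_{sa}\times\mathcal{A}_{sa}^{\mathsf{n}}\to W_{sa}$ (in particular that the updated components $p',p_j'$ still lie in $I_{sa,+}$, which is why the legality rule caps bids below $\mathsf{max}_{sa}-\mathsf{inc}$), and that $\pi_{\Phi,sa}$, $\pi_{Y,sa}$ have the right types. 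This is the routine ``it is routine to check'' part.

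The substantive part is to fix an arbitrary path $\delta$ in $M_{sa}$ and stage $t$, and verify $M_{sa},\delta,t\models\varphi$ for each $\varphi\in\Sigma_{sa}$ (Rules~\ref{sa:initial}--\ref{sa:loop}), unwinding Definition~\ref{def:semantics} and Definition~\ref{def:funcionv} against the explicit state components. I would go rule by rule: Rule~\ref{sa:initial} from the definition of $\bar{w}_{sa}$ together with $\pi_{\Phi,sa}$, $\pi_{Y,sa}$ and the fact that $T_{sa}$ excludes $\bar{w}_{sa}$ (so $initial$ and $\neg terminal$ align); Rule~\ref{sa:sold} directly from the definition of $\pi_{\Phi,sa}(w)$ restricted to $sold_j$; Rule~\ref{sa:terminal} from the definition of $T_{sa}$ plus Rule~\ref{sa:sold}; Rules~\ref{sa:trade1}--\ref{sa:trade0} from clause~(ii) of the ``$w\not\in T_{sa}$'' case of $U_{sa}$ (and from the loop clause when $\delta[t]\in T_{sa}$, using the path definition's stutter on terminal states); Rule~\ref{sa:legal} by matching the three disjuncts against the three cases (i)--(iii) in the definition of $L_{sa}$, reading $legal_i$ via the semantics; Rules~\ref{sa:price}--\ref{sa:bid} from clauses~(iii),(iv),(i) of $U_{sa}$ together with the $alloc$ clause of $\pi_{Y,sa}$; Rule~\ref{sa:alloctrade} from the ``$w\in T_{sa}$'' branch of the $alloc$ clause; Rule~\ref{sa:payment} from clause~(v) of $\pi_{Y,sa}$, expanding the iterated $add$/$times$ via $f$; and Rule~\ref{sa:loop} from the ``otherwise $U_{sa}(w,d)=w$'' branch together with the path's stuttering on terminal states. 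In each case I would split on whether $\delta[t]$ (or $\delta[t+1]$) is terminal, since the update behaves differently there.

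The main obstacle I expect is the interlocking of Rules~\ref{sa:terminal}, \ref{sa:sold}, \ref{sa:trade1}--\ref{sa:trade0}, \ref{sa:pricej} and \ref{sa:bid}: the definition of $terminal$ refers to $sold_j$ and to $bid_{i,j}$ at the \emph{current} state, while $trade_{i,j}$ and $bid_{i,j}$ are determined by the \emph{previous} joint action through $U_{sa}$, and the selling price $price_j$ freezes exactly when $sold_j$ becomes true. Getting the timing right — showing, e.g., that in the state reached after a round the frozen $p_j$ recorded in the state coincides with what Rule~\ref{sa:pricej} predicts, and that once a good is sold the subsequent legality constraint forces the winner to ``repeat'' her bid so $t_{i,j}$ and $b_{i,j}$ stay consistent — requires care about what ``the next state'' means when the current state is already terminal versus not. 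I would handle this by first proving a small auxiliary observation: along any path, for $t\ge 1$ the value $\pi_{Y,sa}(\delta[t],trade_{i,j})=1$ iff agent $i$ was the unique bidder for $j$ at stage $t-1$, and once some $sold_j$ holds it holds forever after and $price_j$ is constant thereafter; the remaining rules then fall out mechanically. As in Proposition~\ref{prop:modelce}, I would write out two or three representative rules in full and state that ``the remaining rules are verified in a similar way.''
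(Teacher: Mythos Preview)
Your proposal is correct and follows essentially the same approach as the paper's own proof: both treat the ST-model conditions as routine, then fix a path and stage and verify a handful of representative rules by unwinding the semantics against the explicit state components, leaving the rest to ``similarly.'' Your plan is in fact more careful than the paper's sketch (which verifies Rules~\ref{sa:initial}, \ref{sa:sold}, \ref{sa:terminal}, and \ref{sa:bid} only and does not isolate the timing issue you flag), but the method is the same.
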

\begin{proof}\emph{(Sketch)} 
It is routine to check that $M_{sa}$ is actually an ST-model. Given a path $\delta$, any stage $t$ of $\delta$ in $M_{sa}$, we need to show that $M_{sa},\delta, t \models \varphi$, for each $\varphi \in \Sigma_{sa}$.
Let us verify Rule \ref{sa:initial}. Assume $M_{sa}, \delta, t \models initial$ iff $\delta[t] = \bar{w}_{sa}$. 
By the definition of $\bar{w}_{sa}$,  $\pi_{\Phi,sa}$ and $\pi_{Y,sa}$, we have $\pi_{Y, sa}(\bar{w}_{sa}, price) = \mathsf{start}$, $\pi_{Y, sa}(\bar{w}_{sa}, price_j) = \mathsf{start}$, $bid_{i,j} \not \in \pi_{\Phi, sa}(\bar{w}_{sa})$ and $trade_{i,j} = 0$, for all $i\in N_{sa}$ and $j\in G_{sa}$. Thus, $M_{sa}, \delta, t \models initial$ iff $M_{sa}, \delta, t \models price = \mathsf{start} $ $ \land  \bigwedge_{j\in G_{sa}} ( price_j = $   $\mathsf{start} \land \bigwedge_{i \in N} (\neg bid_{i,j} \land trade_{i,j} = 0))$. 
  
Now we verify Rule \ref{sa:sold}. Let $j \in G_{sa}$ be a good type. Assume $M_{sa}, \delta, t \models sold_j$ iff $sold_j \in \pi_{\Phi, sa}(\delta[t])$ iff $\pi_{Y, sa}(\delta[t], trade_{i,j}) = 1$ for some $j \in G_{sa}$ iff $M_{sa}, \delta, t \models \bigvee_{i \in N_{sa}} trade_{i,j} = 1$.

Next, we consider Rule $\ref{sa:terminal}$. Assume $M_{sa}, \delta, t \models terminal$ iff $\delta[t] \neq \bar{w}_{sa}$ and for all $j\in G_{sa}$, either $M_{sa}, \delta, t \models trade_{r,j} = 1$ for some $r \in N_{sa}$ or $M_{sa}, \delta, t \models \neg bid_{i,j}$ for all $i \in N_{sa}$. By Rule \ref{sa:sold}, $M_{sa}, \delta, t \models terminal$ iff $M_{sa}, \delta, t \models \neg initial \land \bigwedge_{j \in G_{sa}} (sold_j \lor \bigwedge_{j\in G_{sa}} \neg bid_{i,j})$.

Now we verify Rule \ref{sa:bid}. Let $i \in N_{sa}$ and $j\in G_{sa}$. Assume $M_{sa}, \delta, t \models (does_i(OR((\langle 1,1,p_1\rangle, \dots, \langle 1,\mathsf{m},p_\mathsf{m}\rangle), 0)) \land p_j \neq 0) \lor (bid_{i,j} \land terminal)$, for some $p_1, \dots, p_\mathsf{m} \in I_{sa, +}$. We next prove for the two cases. First, assume $M_{sa}, \delta, t \models bid_{i,j} \land terminal$. Then $bid_{i,j} \in \pi_{\Phi, sa}(\delta[t])$ and $\delta[t] \in T_{sa}$.
By the update function, $\delta[t+1] = \delta[t]$ and $M_{sa}, \delta, t+1 \models bid_{i,j}$, i.e., $M_{sa}, \delta, t \models \bigcirc bid_{i,j}$.
In the second case, assume $does_i(OR((\langle 1,1,p_1\rangle, \dots, \langle 1,\mathsf{m},p_\mathsf{m}\rangle), 0)) \land p_j \neq 0$. By the update function, $bid_{i,j}\in \pi_{\Phi, sa}(\delta[t+1])$ and thus $M_{sa}, \delta, t \models \bigcirc bid_{i,j}$.

The remaining rules are verified in a similar way.
\end{proof}

Next, we show that no good can be bought by  two different agents, \emph{i.e}, given any two agents and a good, one of them will have her  trade equal zero. 

\begin{proposition} \label{prop:sa}
For each $j \in G_{sa}$ and each $i, i' \in N_{sa}$ such that $i\neq i'$,
$M_{sa} \models trade_{i,j} = 0 \lor trade_{i',j} = 0$.
\end{proposition}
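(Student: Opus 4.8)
The plan is to argue directly from the definitions of $T_{sa}$, $U_{sa}$, and $\pi_{Y,sa}$ (and, where convenient, from the already-established soundness of $\Sigma_{sa}$ in Proposition~\ref{prop:modelsa}, in particular Rule~\ref{sa:trade1}). Fix a path $\delta$ in $M_{sa}$, a stage $t$ of $\delta$, a good $j\in G_{sa}$, and two distinct agents $i,i'\in N_{sa}$. I want to show $M_{sa},\delta,t\models trade_{i,j}=0\lor trade_{i',j}=0$, i.e.\ that $\pi_{Y,sa}(\delta[t],trade_{i,j})=0$ or $\pi_{Y,sa}(\delta[t],trade_{i',j})=0$; equivalently, that it is impossible that $t_{i,j}=1$ and $t_{i',j}=1$ simultaneously in the state $\delta[t]$.

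First I would dispose of the base case $t=0$: then $\delta[0]=\bar w_{sa}$, where every trade component is $0$ by definition, so the disjunction holds trivially. For $t\geq 1$, by the path definition $\delta[t]=U_{sa}(\delta[t-1],\theta(\delta,t-1))$. I split on whether $\delta[t-1]\in T_{sa}$. If $\delta[t-1]\in T_{sa}$, then $U_{sa}(\delta[t-1],d)=\delta[t-1]$, so $\delta[t]=\delta[t-1]$ and the claim for $\delta[t]$ follows from the claim for $\delta[t-1]$ (a straightforward induction, with the base case as above). If $\delta[t-1]\notin T_{sa}$, then by the definition of $U_{sa}$ in the non-terminal case, for every agent $r$ and good $j$ we have $t_{r,j}'=1$ iff $b_{r,j}'=1$ and $b_{r',j}'\neq 1$ for all $r'\in N_{sa}\setminus\{r\}$; that is, the trade component $t_{r,j}'$ in $\delta[t]$ equals $1$ only when $r$ is the unique agent with $b_{r,j}'=1$. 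Hence at most one agent can have $t_{r,j}=1$ in $\delta[t]$, and in particular not both $i$ and $i'$ — giving the disjunction.

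The one point that needs a little care is making the induction clean: the statement to prove at stage $t$ is exactly that no two agents share $t_{\cdot,j}=1$, and this is preserved by the "copy" branch of $U_{sa}$ trivially and established outright by the "update" branch, so the induction hypothesis is only actually used in the terminal-loop branch. I expect the main (very mild) obstacle to be simply being careful about the uniqueness condition in clause (ii) of the non-terminal case of $U_{sa}$: the condition "$b_{i,j}' = 1$ and for all $r\in N_{sa}\setminus\{i\}$, $b_{i,j}'\neq 1$" is, read literally, about $b_{i,j}'$ twice, but the intended meaning (consistent with Rule~\ref{sa:trade1}) is $b_{i,j}'=1$ and $b_{r,j}'\neq 1$ for all $r\neq i$; under that reading the uniqueness is immediate. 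Alternatively, and perhaps more cleanly, I would invoke Rule~\ref{sa:trade1} from $\Sigma_{sa}$ (valid in $M_{sa}$ by Proposition~\ref{prop:modelsa}): if $t\geq 1$ and $\delta[t-1]\notin T_{sa}$, then $M_{sa},\delta,t\models trade_{i,j}=1$ implies $M_{sa},\delta,t\models bid_{i,j}\land\bigwedge_{r\neq i}\neg bid_{r,j}$, so $trade_{i,j}=1$ and $trade_{i',j}=1$ would force both $bid_{i,j}$ and $\neg bid_{i,j}$, a contradiction. Either route finishes the proof.
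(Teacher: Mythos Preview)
Your proposal is correct and follows essentially the same approach as the paper: handle the initial state separately (all trades are zero), and for $t\geq 1$ derive a contradiction from the assumption that two distinct agents both have $trade_{\cdot,j}=1$, using the ``unique bidder'' condition encoded either directly in $U_{sa}$ or, equivalently, in Rule~\ref{sa:trade1}. The only cosmetic difference is that the paper applies Rule~\ref{sa:trade1} uniformly at stage $t-1$ without splitting on whether $\delta[t-1]\in T_{sa}$ (since the rule is unconditional and valid at every stage by Proposition~\ref{prop:modelsa}), whereas you treat the terminal-loop case by a small induction; both routes are fine, and your observation about the evident typo in clause~(ii) of $U_{sa}$ is apt.
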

\begin{proof}
Given a path $\delta$ in $M_{sa}$, any stage $t$ of $\delta$ and a good type $j \in G_{sa}$, let $i, i' \in N_{sa}$, such that $i \neq i'$. 
If $\delta[t] = \bar{w}_{sa}$, then $M_{sa}, \delta, t \models trade_{i,j} =0 \land trade_{i',j} =0$ (see Rule \ref{sa:initial}). Otherwise, by the path definition, $\delta[j] = U_{sa}(\delta[t-1], \theta(\delta, t-1))$.
Let us suppose for the sake of contradiction that  $M_{sa}, \delta, t \not \models trade_{i,j} = 0 \lor trade_{i',j} = 0$. Since $W_{sa}$ construction defines $trade_{i,j}, trade_{i',j} \in \{0,1\}$, we have $M_{sa}, \delta, t \models trade_{i,j} = 1 \land trade_{i',j} = 1$. Thus, $M_{sa}, \delta, t-1 \models \bigcirc trade_{i,j}$. By Rule \ref{sa:trade1}, $M_{sa}, \delta, t-1 \models \bigcirc(bid_{i,j} \land \bigwedge_{r \in N_{sa}\setminus \{i\}} \neg bid_{r,j})$. Thereby, $M_{sa}, \delta, t-1 \not \models \bigcirc (bid_{i',j} \land \bigwedge_{r \in N_{sa}\setminus \{i'\}} \neg bid_{r,j})$ and $M_{sa}, \delta$, $t-1 \not \models \bigcirc trade_{i',j =1}$. Thus, $M_{sa}, \delta, t \not \models trade_{i',j} = 1$, which is a contradiction..
\end{proof}

The legal bids in a state respect the $buyer$ and $unit$ restrictions. It means that  agents cannot bid for negative prices and can only ask for one unit of each good.
\begin{proposition} \label{prop:salegal}
For each $i \in N_{sa}$ and each $a \in \mathcal{A}_{sa}$,
$M_{sa} \models legal_i(a) \rightarrow rest_i(buyer, a) \land rest_i(unit, a)$.
\end{proposition}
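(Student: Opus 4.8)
The plan is to unfold the definition of legality in $M_{sa}$ and then read off the syntactic shape of any legal bid-tree; the two restriction checks will then be immediate from Definition~\ref{def:semantics}. First I would fix a path $\delta$ in $M_{sa}$, a stage $t$ of $\delta$, an agent $i \in N_{sa}$ and a bid-tree $a \in \mathcal{A}_{sa}$, and assume $M_{sa}, \delta, t \models legal_i(a)$, which by the semantics of $legal_i$ means $a \in L_{sa}(\delta[t], i)$. By the construction of $L_{sa}$, every member of $L_{sa}(\delta[t],i)$ is syntactically of the form $or_{pr_1,\dots,pr_\mathsf{m}}$ --- an $OR$ over $\mathsf{m}$ single-unit leaves with additional value $0$ --- where each $pr_j$ satisfies $0 \leq pr_j < \mathsf{max}_{sa}-\mathsf{inc}$; the conditions (i)--(iii) further constraining the $pr_j$ (which match the disjuncts of Rule~\ref{sa:legal} of $\Sigma_{sa}$, valid in $M_{sa}$ by Proposition~\ref{prop:modelsa}) only say which such bids are legal in a given state and are irrelevant here, since all we use is that each $pr_j \geq 0$.

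Next I would make the structure of $a = or_{pr_1,\dots,pr_\mathsf{m}}$ explicit. Since $OR(\bar{\beta},v)$ abbreviates $IC_1^{|\bar{\beta}|}(\bar{\beta},v)$, we have $Node(a) = \{a\} \cup Leaf(a)$, the root value is $v_i(a) = 0$, and each leaf $\ell \in Leaf(a)$ has value $v_i(\ell) = pr_j \geq 0$ and demands exactly one unit of exactly one good $j$, so $q_i(\ell,j) = 1$ for that $j$ (and $q_i(\ell,j') = 0$ for $j' \neq j$). Both restrictions now follow directly from the clauses defining $rest_i$ in Definition~\ref{def:semantics}: for $rest_i(buyer,a)$, every node of $a$ has value $\geq 0$ (the root has value $0$, each leaf has value $pr_j \geq 0$) and every leaf $\ell$ has a good $j$ with $q_i(\ell,j) = 1 \geq 0$; for $rest_i(unit,a)$, every leaf $\ell$ has a good $j$ with $q_i(\ell,j) = 1 \in \{-1,1\}$. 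Hence $M_{sa}, \delta, t \models rest_i(buyer,a) \land rest_i(unit,a)$, and as $\delta$ and $t$ were arbitrary, $M_{sa} \models legal_i(a) \rightarrow rest_i(buyer,a) \land rest_i(unit,a)$.

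I do not expect a genuine obstacle here: the statement holds essentially because the SAA protocol only ever offers flat $OR$-of-single-unit-leaves bids with non-negative prices, and $rest_i(buyer,\cdot)$ and $rest_i(unit,\cdot)$ are purely syntactic predicates on $Node(a)$ and $Leaf(a)$. The only two points deserving a line of care are: (a) confirming that the price bounds in $L_{sa}$ (and in Rule~\ref{sa:legal}) really impose $pr_j \geq 0$ and not merely an upper bound; and (b) observing that because $rest_i$ is evaluated on the tree $a$ alone, there is no interaction with the current state, the winner determination, or the tie-breaking order beyond what fixes the admissible prices, so it is legitimate to read the answer directly off the concrete tree $or_{pr_1,\dots,pr_\mathsf{m}}$ returned by $L_{sa}$.
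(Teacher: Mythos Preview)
Your proposal is correct and follows essentially the same approach as the paper's own proof: assume legality, read off the concrete $OR$-of-single-unit-leaves shape from $L_{sa}$, list $Node(a)$ and $Leaf(a)$, and verify the two syntactic conditions for $rest_i(buyer,\cdot)$ and $rest_i(unit,\cdot)$. Your write-up is in fact slightly more careful than the paper's (you explicitly separate the root value $0$ from the leaf values and flag that only the lower bound $pr_j \geq 0$ matters), but the argument is the same.
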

\begin{proof}
Let $\delta$ be a path in $M_{sa}$ and $t$ a stage of $\delta$. Assume $M_{sa} \models legal_i(a)$ iff $a \in L_{sa}(\delta[t],i)$. By $L_{sa}$ construction, $a = OR((\langle 1, 1, p_1 \rangle$, $\dots$, $\langle 1, \mathsf{m}, p_\mathsf{m} \rangle ), 0)$, for $ p_1, \dots, p_\mathsf{m} \in I_{sa, +}$.  
The set of leaves and nodes of $a$ are
$Leaf(a) = \{\langle 1, 1, p_1 \rangle, \dots,\langle 1, \mathsf{m}$, $p_\mathsf{m} \rangle\}$ and  $Node(a) = \{OR((\langle 1, 1, p_1 \rangle$, $\dots$, $\langle 1, \mathsf{m}, p_\mathsf{m} \rangle ), 0)\} \cup Leaf(a)$, resp. Thus, for all $\beta \in Node(a)$, $v_i(\beta) \geq 0$ and for all $\beta' \in Leaf(a)$, $q_i(\beta', j) = 1$ for some $j \in G_{sa}$. Thereby, $ M_{sa}, \delta, t \models rest_i(buyer, a) \land rest_i(unit, a)$.
\end{proof}

The next proposition shows $M_{sa}$ satisfies playability, that is, there is always a legal action for each agent to take.

\begin{proposition}
\label{prop:playabilitysa}
For each agent $i \in N_{ce}$,
$M_{sa} \models \bigvee_{a \in \mathcal{A}_{ce}}legal_i(a)$.
\end{proposition}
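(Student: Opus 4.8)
The plan is to show that every agent always has a legal bid available, by exhibiting a canonical one. Fix a path $\delta$ of $M_{sa}$, a stage $t$ and an agent $i \in N_{sa}$, and write $\delta[t] = \langle b_{1,1}, \dots, b_{\mathsf{n,m}}, t_{1,1}, \dots, t_{\mathsf{n,m}}, p, p_1, \dots, p_\mathsf{m}\rangle$. As the witness for the disjunction I would take the $OR$-bid $a^\star = or_{p_1, \dots, p_\mathsf{m}}$ in which the coefficient placed on good $j$ is the current selling price $price_j$ whenever $trade_{i,j} = 1$ (that is, $i$ already holds $j$), and $0$ otherwise. The first step is to observe that $a^\star$ is one of the $OR$-bids ranged over in the definition of $L_{sa}$, hence $a^\star \in \mathcal{A}_{sa}$.

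Next I would check that $(\delta[t], i, a^\star) \in L_{sa}$, which (invoking Proposition~\ref{prop:modelsa}) is equivalent to verifying that the right-hand side of Rule~\ref{sa:legal} holds at $\delta[t]$, namely $\bigwedge_{j\in G_{sa}}\big((p_j = 0 \land trade_{i,j} = 0) \lor (p_j = price \land \neg sold_j) \lor (p_j = price_j \land trade_{i,j} = 1)\big)$. For a good $j$ with $trade_{i,j} = 1$ the third disjunct holds by construction of $a^\star$; for a good with $trade_{i,j} = 0$ (which $\pi_{Y,sa}$ reads directly off the $t_{i,j}$-component of $\delta[t]$) the first disjunct holds. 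Hence $M_{sa}, \delta, t \models legal_i(a^\star)$ and a fortiori $M_{sa}, \delta, t \models \bigvee_{a \in \mathcal{A}_{sa}} legal_i(a)$; since $\delta$, $t$ and $i$ were arbitrary, this yields the claim.

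The step that needs care — and the main obstacle — is the side condition $0 \le p_j < \mathsf{max}_{sa} - \mathsf{inc}$ that both $L_{sa}$ and Rule~\ref{sa:legal} impose on the bid coefficients: it is immediate for the goods on which $a^\star$ bids $0$, but for a good $j$ with $trade_{i,j} = 1$ one must show $price_j < \mathsf{max}_{sa} - \mathsf{inc}$. I would establish this as a path invariant by induction on $t$. While $j$ is unsold, $price_j$ coincides with the running price $price$ (both equal $\mathsf{start}$ initially and both increase by $\mathsf{inc}$ at each non-terminal step, by Rules~\ref{sa:price} and~\ref{sa:pricej}); the only way $trade_{i,j}$ can turn from $0$ to $1$ is following a \emph{legal} bid that set the coefficient on $j$ to $price$, so by the side condition that value was $< \mathsf{max}_{sa} - \mathsf{inc}$; and once $j$ is sold, clause~(iv) of $U_{sa}$ freezes $price_j$ at that value, while agent $i$ is forced to re-bid $price_j$ on $j$ and no other agent may bid on $j$, so $sold_j$ and $trade_{i,j} = 1$ persist thereafter. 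Consequently $price_j < \mathsf{max}_{sa} - \mathsf{inc}$ holds whenever $i$ holds $j$, which is exactly what makes $a^\star$ legal and closes the argument.
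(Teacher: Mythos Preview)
Your proposal is correct and follows essentially the same approach as the paper: you exhibit the same canonical $OR$-bid (bid $price_j$ on goods already held, $0$ elsewhere) and verify it against the legality relation. In fact, the paper's printed proof appears to have the two cases swapped (it sets $p_j=0$ when $trade_{i,j}=1$), so your version is the one that actually matches Rule~\ref{sa:legal} and the definition of $L_{sa}$.

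The one substantive difference is that you take the side condition $0\le p_j<\mathsf{max}_{sa}-\mathsf{inc}$ seriously and discharge it with a path-invariant argument, whereas the paper simply asserts membership in $L_{sa}$ without mentioning this bound. Your inductive sketch is sound: while $j$ is unsold $price_j$ tracks $price$; the transition $trade_{i,j}\colon 0\to 1$ can only come from a legal bid with coefficient $price$, forcing that value below $\mathsf{max}_{sa}-\mathsf{inc}$; and clause~(iv) of $U_{sa}$ then freezes $price_j$. One tiny caveat: your claim that ``$trade_{i,j}=1$ persists thereafter'' uses that the frozen $price_j$ is nonzero (otherwise re-bidding $price_j$ counts as bidding $0$ and $b_{i,j}'$ drops); this is harmless for the conclusion, since at any stage where $trade_{i,j}$ has reverted to $0$ your witness bids $0$ on $j$ anyway.
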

\begin{proof}
Given a path $\delta$ in $M_{sa}$ and a stage $t$ in $\delta$, we show that there is a legal action for agent $i$ in $\delta[t]$. For each $j \in G_{sa}$, let $p_j = 0$ if $\pi_{Y, sa}(\delta[t], trade_{i,j}) = 1$. Otherwise, let $p_j = \pi_{Y, sa}(\delta[t], price_j)$. By the definition of $L_{sa}$, we have $OR((\langle 1,1,p_1\rangle, \dots,\langle1$, $\mathsf{m},p_\mathsf{m}\rangle),0) \in L_{sa}(\delta[t], \allowbreak i)$. Thus, $M_{sa}, \delta, t \models \bigvee_{a \in\mathcal{A}_{sa}}legal_i(a)$.
\end{proof}

Each path in $M_{sa}$ reaches a terminal state, and thus the protocol satisfies the termination condition from GGP.

\begin{proposition}
\label{prop:terminationsa}
    For each path $\delta$ in $M_{sa}$, $\delta$ is  complete.
\end{proposition}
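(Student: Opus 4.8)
The plan is to argue by contradiction, using the fact that $price$ grows strictly along every non-terminal transition of $M_{sa}$ while staying inside the finite range $I_{sa}$. Suppose $\delta$ is a path of $M_{sa}$ that is not complete, i.e.\ $\delta[t]\notin T_{sa}$ for every $t\ge 1$. Since $\delta[0]=\bar{w}_{sa}\notin T_{sa}$ as well, every transition along $\delta$ is taken from a non-terminal state, so $U_{sa}$ is applied through its first branch at each stage; reading off its $price$ component gives $\pi_{Y,sa}(\delta[t+1],price)=\pi_{Y,sa}(\delta[t],price)+\mathsf{inc}$, hence by induction $\pi_{Y,sa}(\delta[t],price)=\mathsf{start}+t\cdot\mathsf{inc}$ for all $t\ge 0$. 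Because $\mathsf{inc}\in I_{sa,+}\setminus\{0\}$ we have $\mathsf{inc}\ge 1$, so there is a least $K\ge 0$ with $\pi_{Y,sa}(\delta[K],price)\ge \mathsf{max}_{sa}-\mathsf{inc}$; note $\delta[K]\notin T_{sa}$ (for $K\ge 1$ by the contradiction hypothesis, and $\delta[0]=\bar w_{sa}\notin T_{sa}$ when $K=0$).

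The key step is to see what legality forces at stage $K$. By the definition of a path (Definition~\ref{def:path}), $\theta_i(\delta,K)\in L_{sa}(\delta[K],i)$ for every $i\in N_{sa}$, so each such action is an OR-bid whose leaf prices $pr_j$ obey, for every good $j$, the bound $pr_j<\mathsf{max}_{sa}-\mathsf{inc}$ together with one of the three options in the definition of $L_{sa}$. Since $\pi_{Y,sa}(\delta[K],price)\ge \mathsf{max}_{sa}-\mathsf{inc}$, option (ii) — bidding the current price $p$ on an unsold good — is incompatible with the bound, hence unavailable for every good. Therefore $pr_j\neq 0$ is possible only through option (iii), which requires $\pi_{Y,sa}(\delta[K],trade_{i,j})=1$: at stage $K$ an agent can carry a nonzero price only on a good it has already won.

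Now read off $\delta[K+1]=U_{sa}(\delta[K],\theta(\delta,K))$ through the first branch of $U_{sa}$, which applies since $\delta[K]\notin T_{sa}$. For each good $j$, the active bidders of $\delta[K+1]$ are exactly the agents $i$ whose action $\theta_i(\delta,K)$ carries a nonzero price on $j$; by the previous paragraph any such agent has $\pi_{Y,sa}(\delta[K],trade_{i,j})=1$, so by Proposition~\ref{prop:sa} there is at most one of them. If there is exactly one, say $i$, the update awards $j$ to the unique bidder, giving $\pi_{Y,sa}(\delta[K+1],trade_{i,j})=1$; if there is none, then $bid_{i,j}\notin\pi_{\Phi,sa}(\delta[K+1])$ for every $i$. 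Either way, for every $j$ the state $\delta[K+1]$ is such that $j$ is traded to some agent or has no active bidder, and since $\delta[K+1]\neq\bar{w}_{sa}$ (as $K+1\ge 1$), this is precisely the membership condition of $T_{sa}$. Hence $\delta[K+1]\in T_{sa}$, contradicting the assumption that $\delta$ is not complete; therefore every path of $M_{sa}$ reaches a terminal state and is complete.

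I expect the only delicate point to be the bookkeeping in the last paragraph: tracking how the legality bound $pr_j<\mathsf{max}_{sa}-\mathsf{inc}$ at stage $K$ propagates through $U_{sa}$ so that, at stage $K+1$, every good is either already sold — with its unique owner as the sole bidder, which is exactly where Proposition~\ref{prop:sa} is invoked — or bidder-free. The monotone growth of $price$ and the choice of $K$ are routine once one recalls $\mathsf{inc}\ge 1$; and since $\delta$ is assumed to be a genuine path of $M_{sa}$, all of its states already lie in $W_{sa}$, so no separate argument is needed to rule out $price$ overflowing $I_{sa}$.
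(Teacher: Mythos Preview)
Your proof is correct and follows essentially the same route as the paper's: both argue by contradiction, exploit the strict growth of $price$ by $\mathsf{inc}$ along non-terminal transitions to reach a stage where option (ii) in $L_{sa}$ is blocked by the bound $pr_j<\mathsf{max}_{sa}-\mathsf{inc}$, and then conclude that the next state must be terminal. Your write-up is in fact more careful at the final step: where the paper appeals somewhat loosely to Rules~\ref{sa:terminal} and~\ref{sa:bid} of $\Sigma_{sa}$, you explicitly invoke Proposition~\ref{prop:sa} to guarantee that at most one agent can carry a nonzero price on each good, which is what makes the case split (unique bidder wins versus no bidder) go through cleanly.
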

\begin{proof} 
Remind $\mathsf{start} \in I_{sa, +}$ and $\mathsf{inc}\in I_{sa, +}\setminus \{0\}$. 
Let $\delta$ be a path in $M_{sa}$. In $\delta[0]$, $\pi_{Y,sa}(\delta[0], price) = \mathsf{start}$. By the update function, for any stage $t$, if $\delta[t] \not \in T_{sa}$, then $\pi_{Y,sa}(\delta[t+1], price) = \pi_{Y,sa}(\delta[t], price)+\mathsf{inc}$.

For the sake of contradiction, let us assume $\delta$ is not complete. 
Let $i \in N_{sa}$ be any agent. 
By the definition of $L_{sa}$,  $legal_i(OR((\langle 1,1,p_1\rangle$, $\dots, \langle 1,\mathsf{m}, p_\mathsf{m}\rangle), 0)) \in L(\delta[t], i)$, 
for all $0 \leq p_j < \mathsf{max}_{sa}-\mathsf{inc}$ and $j\in G_{sa}$, such that either (i) $p_j = 0$ \& $\pi_{Y, sa}(\delta[t]$, $trade_{i,j}) = 0$, or (ii) $p_j = price$ \& $\pi_{Y, sa}(\delta[t], trade_{r,j}) = 0$ for all $r\in N_{sa}$, or (iii) $p_j = price_j$ \& $\pi_{Y, sa}(\delta[t], trade_{r,j}) = 0$. 
 Since $\pi_{Y,sa}(\delta[t+1], price)>\pi_{Y,sa}(\delta[t], price)$, there will be a stage $e\geq0$ in $\delta$, where the condition (ii) will not be true for any $0\leq p_j < \mathsf{max}_{sa}-\mathsf{inc}$.
 
 Thus, for each good $j$, it will be the case that $i$ bids $0$ for it or the good was assigned to her (i.e., $\pi_{Y,sa}(\delta[e], trade_{i,j})= 1$). From Rules \ref{sa:terminal} and \ref{sa:bid} in $\Sigma_{sa}$, it follows that $\delta[e+1]\in T_{sa}$. Thus, $\delta$ is a complete path, which is a contradiction. 
\end{proof}

From being a single-side auction where all agents are buyers, it follows that there is no-deficit in $M_{sa}$, but it is not budget-balanced.
\begin{proposition}
\label{prop:saBBnoDef}
$M_{sa} \not \models BB$ and $M_{sa}  \models noDeficit$. 
\end{proposition}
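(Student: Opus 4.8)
The plan is to split into the two claims. For $M_{sa} \models noDeficit$, I would take an arbitrary path $\delta$ in $M_{sa}$ and an arbitrary stage $t$, and show $M_{sa}, \delta, t \models add(payment_1, \dots, payment_\mathsf{n}) \geq 0$. By the model definition, $\pi_{Y, sa}(\delta[t], payment_i) = \sum_{j \in G_{sa}} p_j \cdot \pi_{Y, sa}(\delta[t], trade_{i,j})$, where each $p_j = \pi_{Y, sa}(\delta[t], price_j) \in I_{sa,+}$ and each $\pi_{Y, sa}(\delta[t], trade_{i,j}) \in \{0,1\}$ by the construction of $W_{sa}$. Hence every individual payment is a sum of products of non-negative integers, so $\pi_{Y, sa}(\delta[t], payment_i) \geq 0$ for each $i$, and therefore the cumulative payment $\sum_{i \in N_{sa}} \pi_{Y, sa}(\delta[t], payment_i) \geq 0$. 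Lifting this through Function $f$ and the semantics of $add$ and $\geq$ gives $M_{sa}, \delta, t \models noDeficit$; since $\delta$ and $t$ were arbitrary, $M_{sa} \models noDeficit$. Alternatively, one can invoke Rule~\ref{sa:payment} together with Proposition~\ref{prop:modelsa} to avoid unfolding the model directly, but the direct argument via $\pi_{Y,sa}$ is the cleanest.

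For $M_{sa} \not \models BB$, it suffices to exhibit one instance, one path, and one stage where $add(payment_1, \dots, payment_\mathsf{n}) = 0$ fails, i.e., where some good is actually sold at a positive price. I would instantiate $\mathsf{n} = 1$, $\mathsf{m} = 1$, $\mathsf{start} = 1$, $\mathsf{inc} = 1$ (one agent, one good, a positive starting price), and consider a path where in $\bar{w}_{sa}$ the single agent $1$ plays $or_{p_1}$ with $p_1 = price = \mathsf{start} = 1$, which is legal by Rule~\ref{sa:legal} (condition (ii), since good $1$ is unsold). By Rule~\ref{sa:trade1}, agent $1$ being the only bidder gives $\bigcirc\, trade_{1,1} = 1$, so in $\delta[1]$ the good is sold at $price_1 = 1$. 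By Rule~\ref{sa:payment}, $payment_1 = times(price_1, trade_{1,1}) = 1 \neq 0$, hence $M_{sa}, \delta, 1 \not\models BB$, so $M_{sa} \not\models BB$. (The choice of a single agent sidesteps any worry about competing bids forcing $trade = 0$ via Rule~\ref{sa:trade0}.)

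I do not expect a serious obstacle here: both halves are essentially bookkeeping over the model definition. The one point requiring a little care is making sure the chosen instance for the counterexample actually admits the claimed path — in particular that $or_{p_1}$ with $p_1 = 1$ is legal in the initial state and that the resulting successor is reached along a genuine path of $M_{sa}$ (Definition~\ref{def:path}); this follows because $\bar{w}_{sa} \notin T_{sa}$ (no good is sold and the agent can still bid), so the update function applies non-trivially. A secondary subtlety is ensuring $\mathsf{start} < \mathsf{max}_{sa} - \mathsf{inc}$ so that $p_1 = \mathsf{start}$ lies in the legal price range of Rule~\ref{sa:legal}; since $I_{sa}$ is assumed large enough (as stipulated after Definition~\ref{def:sig}), we may take $\mathsf{max}_{sa}$ large enough to accommodate this.
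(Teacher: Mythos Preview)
Your argument for $noDeficit$ is essentially the paper's: both observe that in any state of $M_{sa}$ each $price_j \in I_{sa,+}$ and each $trade_{i,j} \in \{0,1\}$, so by Rule~\ref{sa:payment} every $payment_i$ is a non-negative sum, hence the cumulative payment is non-negative.

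For $M_{sa} \not\models BB$ there is a small but genuine mismatch in framing. The paper, having fixed an arbitrary instance $M_{sa} \in \mathscr{M}_{sa}$ (the ``Hereafter, we assume an instance\ldots'' paragraph), argues within that generic instance: it simply posits a reachable state in which some $alloc_{i',j'} = 1$ and $price_{j'} > 0$, and reads off $payment_{i'} > 0$ from Rule~\ref{sa:payment}. You instead \emph{choose} the parameters $\mathsf{n} = \mathsf{m} = \mathsf{start} = \mathsf{inc} = 1$ and build the witnessing path there. That is a clean, fully explicit counterexample, but strictly speaking it establishes $\not\models BB$ only for that particular member of $\mathscr{M}_{sa}$, whereas the proposition (as the paper frames it) concerns the fixed-but-arbitrary $M_{sa}$. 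Your construction does generalize easily to any instance with $\mathsf{start} > 0$ --- let a single agent be the sole bidder for some good at the current price --- so the fix is to phrase the counterexample inside the generic instance rather than re-instantiating the constants. The paper's sketch is itself informal on exactly this point (it never exhibits the path leading to the assumed state), so your version is, if anything, more careful; you just need to align the quantification with the statement.
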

\begin{proof} \emph{(Sketch)}  
Given a path $\delta$ in $M_{sa}$ and a stage $t$ in $\delta$, let us show a counterexample.
Note that each allocation can be either 0 or 1 and the good price is at least 0, i.e., $\pi_{Y, sa}(\delta[t], alloc_{i,j}) \{0, 1\}$ and $\pi_{Y, sa}(\delta[t], price_j) \in I_{sa, +}$.
Assume $\pi_{Y, sa}(\delta[t], alloc_{i',j'}) = 1$ and $\pi_{Y, sa}(\delta[t], price_{j'}) >0$, for some $j'\in G_{sa}$ and some $i' \in N_{sa}$. It follows from Rule \ref{sa:payment} that $M_{sa}, \delta, t \models payment_{i'} >0$ and  $M_{sa}, \delta, t \models add(payment_1, \dots, payment_\mathsf{n}) >0$. Thus, 
$M_{sa} \not \models BB$ and $M_{sa}  \models noDeficit$. 
\end{proof}

Finally, the agents can always ensure that their utility will be at least as good in the next state as it was in the current, \emph{i.e},  $IR_i$ is valid in $M_{sa}$, for each $i$.
\begin{thm}
For each $i \in N_{sa}$ and some valuation $\vartheta_i$ over individual trades, 
 $M_{sa} \models IR_i$. 
\end{thm}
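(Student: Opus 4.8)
The plan is to unfold the definition of $IR_i$. Fix an arbitrary path $\delta$ of $M_{sa}$, a stage $t$, an agent $i\in N_{sa}$, and a monotonic valuation $\vartheta_i$ over individual trades; I must produce a path $\delta'$ of $M_{sa}$ with $\delta'[0,t]=\delta[0,t]$ and $\theta_r(\delta',t)=\theta_r(\delta,t)$ for every $r\neq i$, along which $i$'s utility at stage $t+1$ is at least its value at stage $t$. I would build $\delta'$ by keeping $\delta$'s prefix, letting $i$ play one carefully chosen bid at stage $t$, and then extending arbitrarily using playability (Proposition~\ref{prop:playabilitysa}). Since the legality relation $L_{sa}$ constrains each agent independently of the others, substituting only $i$'s action at stage $t$ leaves the joint action legal, so such a $\delta'$ exists.

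I would then split on $\delta[t]$. If $\delta[t]$ is terminal, take $\delta'=\delta$: by the terminal-loop clause of the path definition (equivalently Rule~\ref{sa:loop}) every numerical variable, hence $trade_i$ and $payment_i$ and therefore $i$'s utility, is unchanged from $t$ to $t+1$. If $\delta[t]=\bar{w}_{sa}$, let $i$ play $or_{0,\dots,0}$, which is legal by Rule~\ref{sa:legal} because $trade_{i,j}=0$ for every $j$ in the initial state; then in $\delta'[t+1]$ agent $i$ bids for no good, so $trade_{i,j}=0$ for all $j$ (Rules~\ref{sa:bid} and \ref{sa:trade0}) and $payment_i=0$ (Rule~\ref{sa:payment}), exactly as at stage $t$, and the utility is preserved (this reproduces, in an even simpler form, the initial-state case of the proof of Theorem~\ref{thm:irce}).

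The main case is $\delta[t]$ non-initial and non-terminal. Here I would let $i$ play $or_{p_1,\dots,p_\mathsf{m}}$ with $p_j=price_j$ whenever $trade_{i,j}=1$ at $\delta[t]$ and $p_j=0$ otherwise --- informally, \emph{hold the goods currently won and ask for nothing else}. This is a legal bid: for a good with $trade_{i,j}=1$ the disjunct $p_j=price_j\land trade_{i,j}=1$ of Rule~\ref{sa:legal} applies, and for the remaining goods the disjunct $p_j=0\land trade_{i,j}=0$ applies. I would then compare $\delta'[t]$ and $\delta'[t+1]$. For each $j$ with $trade_{i,j}=1$ at stage $t$, good $j$ is $sold_j$ (Rule~\ref{sa:sold}), so by Proposition~\ref{prop:sa} and Rule~\ref{sa:legal} every other agent $r$ is forced to put $0$ in coordinate $j$ of any legal bid --- in particular in $\theta_r(\delta,t)=\theta_r(\delta',t)$ --- hence $i$ is again the sole bidder for $j$, so $trade_{i,j}=1$ at $t+1$ (Rule~\ref{sa:trade1}) and $price_j$ stays frozen (Rule~\ref{sa:pricej}). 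For each $j$ with $trade_{i,j}=0$, agent $i$ does not bid for $j$, so $trade_{i,j}=0$ at $t+1$ (Rule~\ref{sa:trade0}). Thus the vector $trade_i$ is the same at $t$ and $t+1$, and by Rule~\ref{sa:payment} so is $payment_i$; hence $i$'s utility at $t+1$ equals its value $x$ at $t$, and $M_{sa},\delta',t\models utility_i(trade_i,payment_i)=x\rightarrow\bigcirc utility_i(trade_i,payment_i)\geq x$ for every $x\in I_{sa}$, as required. In contrast with Theorem~\ref{thm:irce}, monotonicity of $\vartheta_i$ plays no role here, the deviation preserving $i$'s trade and payment exactly.

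The step I expect to be the real obstacle is the legality bookkeeping hidden in the main case: checking that the ``hold'' bid actually lies in $L_{sa}(\delta'[t],i)$, which additionally requires $p_j=price_j$ to be nonzero --- otherwise, by Rule~\ref{sa:bid}, the re-bid would fail to register and $i$ would lose the good --- and to respect the bound $p_j<\mathsf{max}_{sa}-\mathsf{inc}$ imposed by $L_{sa}$. Both follow from tracing how a good gets sold: its $price_j$ freezes at the current-price bid of the round in which it was won, and, because with $\mathsf{inc}\geq 1$ a bid only registers as active at a strictly positive price, that value is both positive and --- being itself a legal bid at the time --- below $\mathsf{max}_{sa}-\mathsf{inc}$. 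Everything else is a routine unfolding of the rules of $\Sigma_{sa}$ and of $U_{sa}$, in the style of the proof of Theorem~\ref{thm:irce}.
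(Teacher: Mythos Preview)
Your proposal is correct and takes essentially the same approach as the paper: construct $\delta'$ by having agent $i$ play the ``hold'' bid $or_{p_1,\dots,p_\mathsf{m}}$ with $p_j=price_j$ when $trade_{i,j}=1$ and $p_j=0$ otherwise, and argue that $i$'s trade vector and payment are unchanged at stage $t+1$. The paper's proof is much terser --- it uses this single bid uniformly without splitting into the initial/terminal/intermediate cases, and it does not spell out the legality bookkeeping (positivity of $price_j$ for sold goods, the $\mathsf{max}_{sa}-\mathsf{inc}$ bound, or why other agents cannot bid for a good already sold to $i$) that you correctly identify and resolve --- but the underlying construction is identical.
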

\begin{proof}
Given a path $\delta$ in $M_{sa}$ and a stage $t$, 
let $T_i = OR((\langle 1, 1, p_1\rangle$, $\dots, \langle 1, \mathsf{m}, p_\mathsf{m}\rangle), 0)$, where $p_j = 0$ if $\pi_{Y, sa}(\delta[t],trade_{i, j}) =0$; otherwise $p_j = \pi_{Y, sa}(\delta[t]$, $price_j)$, for each $j\in G_{sa}$.
Since $T_i \in L_{sa}(\delta[t], i)$, we can construct a path $\delta'$, such that $\delta'[0,t] = \delta[0,t]$, $\theta_i(\delta', t) = T_i$, and $\theta_r(\delta', t) = \theta_r(\delta, t)$, for all $r\in N_{sa}\setminus \{i\}$.
Thus, $M_{ce}, \delta', t \models utility_i(trade_i$, $payment_i) = x \rightarrow \bigcirc  utility_i(trade_i$, $payment_i) = x$, for $x\in I_{sa}$ and $M_{ce}, \delta,t \models IR_r$. 
\end{proof}

\section{Discussion and Conclusion}
\balance
In this paper, we have presented a unified framework for representing auction protocols. Our work is at the frontier of auction theory and knowledge representation.

\paragraph{Related work}

Our work is rooted in the key contributions on Combinatorial Auctions \cite{Nisan2000,Nisan2004,Xia2005,Parkes2005}. All these works adopt a mechanism design perspective: they focus on the properties of a given protocol and bidding language. Our work has a different purpose. The CEDL language includes the bidding part of a protocol, but also the protocol itself. Such a language can be used to automatically derive properties for these protocols. CEDL can also be used as a framework for testing new auction protocols. 

To the best of our knowledge, almost all contributions on the computational representation of 
auction-based markets focus on the implementation of the winner determination problem. For instance, Baral and Ulyan (\citeyear{BU01}) show how a specific auction, namely combinatorial auctions, 
can be encoded in a logic program. 
A hybrid approach mixing linear programming and logic programming has been proposed by Lee and Lee (\citeyear{LL97}): they focus on sealed-bid auctions and show how qualitative reasoning helps to refine the optimal quantitative solutions. 
Giovannucci et al.  (\citeyear{Giovannucci2010}) explore a graphical formalism to compactly represent the winner determination problem for Multi-Unit Combinatorial Auctions.
The closest contributions to ours are the \emph{Market Specification Language} (MSL) \cite{TZ10} and ADL \cite{MP2020}, also based on GDL. They both focus on representing single good auction through a set of rules and then interpreting an auction-instance with the help of a state-based semantics. MSL is limited to single agent perspective while ADL is not. However, the main limit of both approaches is the absence of a bidding language.

\paragraph{Going Further}
First direction is Computational complexity. Although the model-checking (MC) problem in ADL
is PTIME \cite{MP2020}, 
the winner determination in Combinatorial Auctions (and thus also in Combinatorial Exchange) is known to be NP-complete \cite{rothkopf1998computationally}. We aim to explore how the MC problem in CEDL is affected by these results. Clearly, the fragment of CEDL without formulas referring to the WD mixed-integer program is still PTIME. For instance, the Simultaneous Ascending Auction protocol described in this paper avoids such formulas.

CEDL definitely puts the emphasis on the auctioneer and auction designer. Our second direction is to design a CEDL-based \textit{General Auction Player} (GAP) that can interpret and reason about the rules of an auction-based market. The key difference, when the players' perspective is considered, is the epistemic and strategic aspects: players have to reason about other players' behavior.
The epistemic component will allow an agent to bid according to her beliefs about other agents' private values. Our future GAP should then be based on the epistemic extensions of GDL such as GDL-III \cite{Thielscher2016} and Epistemic GDL \cite{Jiang2016a}. 



  

\begin{acks}
This research is supported by the ANR project AGAPE ANR-18-CE23-0013. 
\end{acks}



\bibliographystyle{ACM-Reference-Format} 


\end{document}